\documentclass[11pt,a4paper]{article}

\usepackage[latin1]{inputenc}
\usepackage[ngerman,english]{babel}
\usepackage{graphicx}

\usepackage{amsmath,amssymb}
\usepackage{
						amsthm
           ,hyperref, bbm
           ,enumerate
           }
\usepackage[numbers,authoryear]{natbib}
\usepackage[usenames,dvipsnames]{color}
\usepackage[left=3cm,top=3.5cm,right=2cm,bottom=3cm]{geometry}
\parindent 0mm

\newtheorem{theorem}{Theorem}

\newtheorem{corollary}[theorem]{Corollary}

\newtheorem{example}[theorem]{Example}

\newtheorem{proposition}[theorem]{Proposition}
\newtheorem{remark}[theorem]{Remark}


\definecolor{darkgreen}{rgb}{0.00,0.40,0.00}

\newcommand{\argmax}{\mathrm{argmax}}

\newcommand{\cb}[1]{\color{blue}#1\color{black}}
\renewcommand{\cb}[1]{#1}

\newcommand{\cbx}[1]{\color{blue}#1\color{black}}
\renewcommand{\cbx}[1]{#1}

\newcommand{\1}{\mathbbm{1}}

\newcommand{\IE}{\mathbb{E}}
\newcommand{\E}{\mathbb{E}}

\newcommand{\IN}{\mathbb{N}}

\newcommand{\cE}{\mathcal{E}}
\newcommand{\cF}{\mathcal{F}}

\newcommand{\std}{\mathsf{stddev}}

\hyphenation{pro-ba-bi-li-ty}

\title{Dual representations for general multiple stopping problems}
\author{	
	\small Christian Bender\\
        \footnotesize  Universit\"at des Saarlandes\\
        \footnotesize  Fachrichtung Mathematik\\
        \footnotesize  Postfach 151150\\
        \footnotesize  66041 Saarbr\"ucken \\
        \footnotesize  bender@math.uni-sb.de
        	\and
	\small John Schoenmakers\\
        \footnotesize  Weierstrass Institute for\\
        \footnotesize  Applied Analysis and Stochastics \\
        \footnotesize  Mohrenstr. 39\\
        \footnotesize  10117 Berlin \\
        \footnotesize  schoenma@wias-berlin.de
        	\and
	\small Jianing Zhang \\
        \footnotesize  Weierstrass Institute for\\
        \footnotesize  Applied Analysis and Stochastics \\
        \footnotesize  Mohrenstr. 39\\
        \footnotesize  10117 Berlin \\
        \footnotesize  zhang@wias-berlin.de
\vspace*{0.4cm}
}

\begin{document}

\def\linenumberfont{\normalfont\small\sffamily}
\selectlanguage{english}
\maketitle

\begin{abstract}
\noindent In this paper, we study the dual representation for generalized multiple stopping problems, hence the
pricing problem of \cb{general }multiple exercise options. We derive a dual representation which allows for cashflows which
are subject to volume constraints modeled by integer valued adapted processes and refraction periods modeled by
stopping times. As such, this extends the works by \cite{Schoen2010}, \cite{Bender2010}, \cite{Bender_vol},
\cite{AH2010}, and \cite{MH2004} on multiple exercise options, \cb{which }either take into consideration a refraction \cb{period }or
volume constraints, but not both simultaneously.
\cb{We also allow more flexible cashflow structures than the additive structure in the above references. For example some exponential
utility problems are covered by our setting. }We supplement the theoretical results with an explicit Monte Carlo algorithm
for constructing confidence intervals for the price of multiple exercise options and exemplify it by a numerical study
on the pricing of a swing option in \cb{an }electricity market.
\end{abstract}

\medskip

{\bf 2010 AMS subject classifications:} 60G40, 65C05, 91B25.
\\
{\bf Key words and phrases:} general multiple stopping, dual representations, multiple exercise options, volume constraints,  refraction period.

\medskip

\section{Introduction}

The last decades have seen ground breaking developments of Monte Carlo methods for American options based on multidimensional underlying price processes.
In the late nineties the regression based methods by \cite{Car1996}, \cite{LS2001}, and \cite{TV2001} may be
considered as main breakthroughs.  In general these methods provide lower bounds \cb{on the option price }by constructing an approximation
to the optimal exercise (stopping) time via regression on a set of basis functions.  As such these approaches are termed ``primal''.
At the beginning of this century \cite{Rogers2002} and independently \cite{HK2004} provided the next
breakthrough by presenting a ``dual'' representation for the optimal stopping problem corresponding to
the American option pricing problem. In this representation the option price is expressed as infimum of an expectation
over a set of martingales. (The key behind this dual representation \cb{can }already be found in \cite{DK1994}, in fact.)
  While in the primal methods the central problem is to find a ``good'' stopping time, in
 the dual problem one needs to find a ``good'' martingale,  which \cb{leads to }an upper bound for the
price of an American option. As one of the standard numerical approaches to compute dual upper bounds
for American options by Monte Carlo we refer to \cite{AB2004}.

During the same time, in the emerging electricity markets products with a multiple of exercise opportunities, such as
 ``swing options'', became popular. Naturally, pricing of such a product leads to a multiple stopping problem,
and so numerical methods for solving multiple stopping problems were called for. In this respect,  generalization
of the existing primal regression methods for standard optimal stopping was just a matter of \cb{routine. }
Further, \cite{BenSch2006} developed a kind of policy iteration for  multiple stopping.
However, regarding the dual approach
the situation was not so clear. \cite{MH2004} proposed a dual representation for the multiple stopping problem via expressing the excess value due to each additional exercise right by an
infimum of an expectation over a set of martingales \cb{and }a set of stopping times.
This line of research was carried out further by \cite{AH2010} and \citet{Bender_vol} in the context of dual
 pricing of multi-exercise options under volume constraints.
\cb{Recently, \cite{Schoen2010} introduced a dual representation for the price of a multiple exercise option in contrast
to the dual representation for the excess value of an additional right. This new dual representation involves an infimum
over martingales only and can thus be considered as a more natural extension of the dual representation for single exercise options. This
approach  was generalized by \cite{Bender2010} to a continuous time setting involving (constant) refraction periods.}

In the meantime \cite{Quenez2010} introduced and studied multiple stopping problems \cb{in the primal sense }in a far more general context, where the payoff is
considered to be some abstract functional of an (ordered) sequence of stopping times.
The goal of \cb{the present paper }is to find (pure) martingale dual representations for \cb{such }generalized multiple
 stopping problems  in a discrete time setting. As we will show, such representations
can be constructed even in a most general setting. However, for practical \cb{implementation these general representations
unfold their full strength only, if applied to some more specifically structured cashflows. }In this respect we study a generic payoff structure with both multiplicative and additive structure that incorporates (integer valued)
volume constraints and refraction periods given by stopping \cb{times.  We }furthermore provide an explicit Monte Carlo
based algorithm and give a detailed numerical study exemplifying the pricing of swing options.
 Comparing to existing works, the numerical experiments reveal that, \cb{by and large, }the dual algorithms due to our new
 representations applied to the problem \cb{type }considered in \cite{AH2010} and \cite{Bender_vol} produce tighter upper bounds \cb{on
the option price, in particular when the number of exercise rights is large. }We moreover present a numerical example which involves swing options subject to both volume constraints and refraction periods and give tight confidence intervals for the respective option prices.
\cb{We underline }that the latter example cannot be treated by the \cb{dual }methods presented in the literature so far.

The structure of the paper is as follows: \cbx{In Section 2 we derive a dual representation for general multiple stopping problems
in terms of a family of martingales. As this family of martingales is typically too large for practical purposes in general,
we specialize to a generic
cashflow with additive and multiplicative structure which incorporates volume constraints and refraction periods in Section 3.
In this section we then prove two dual representations. One is in terms of the Doob decomposition of the Snell envelopes of
an auxiliary family of stopping
problems, the other one only requires approximations of these Snell envelopes. In Section 4 we explain how to build a Monte Carlo algorithm
for computing confidence intervals on the value of the multiple stopping problems based on the results of Section 3 and perform
some numerical experiments in the context of swing option pricing. }

\section{General multiple stopping problem}

\cb{In this section we consider a multiple stopping problem in discrete time $i=0,\ldots, T$,
where $T\in \mathbb{N}$ is a fixed and finite time horizon. }We further introduce a \cb{``cemetery time'' $\partial:=T+1$ }where all rights will be exercised, which are not exercised up to time $T.$
For a given filtration $(\mathcal{F}_{i})_{0\leq i\leq\partial}$ and a number
$L$ of exercise dates we next consider a cashflow $X$ as a map
$X:\{0,...,T,\partial\}^{L}\times\Omega\rightarrow\mathbb{R}$ which satisfies \cb{for all $0\leq i_{1}\leq\cdot\cdot\cdot\leq i_{L}\leq\partial,$
\begin{gather*}
X_{i_{1},...,i_{L}} \textnormal{ is } \mathcal{F}_{i_{L}}\textnormal{-measurable},\\
\IE\left\vert
X_{i_{1},...,i_{L}}\right\vert <\infty.
\end{gather*}
}Now consider
the stopping problem (sup:=ess.sup, $\IE_{i}$ $:=$ $\IE_{{\cal F}_i}$),%
\[
Y_{i}^{\ast L}=\sup_{i\leq\tau^{1}\leq\cdot\cdot\cdot\leq\tau^{L}}%
\IE_{i}\,X_{\tau^{1},...,\tau^{L}},
\]
where the supremum runs over a family of ordered stopping times $\tau^{k}, {1 \leq k \leq L}$.

Let us define \cb{for $k=2,\ldots, L$ and $0\leq j_1\ldots\leq j_{k-1}\leq r\leq \partial$},
\begin{align}
Y_{r}^{\ast L-k+1,j_{1},\ldots,j_{k-1}}  :=\sup_{r\leq\tau^{k}\leq
\cdots\leq\tau^{L}}\IE_{r}X_{j_{1},\ldots,j_{k-1},\tau^{k},\ldots,\tau
^{L}}, \label{Reddef}
\end{align}
with the convention that for $k=1$, we put $Y_{r}^{\ast L,\varnothing}$ $:=$ $Y_{r}^{\ast L}$, \cb{and for $k=L+1$, we put
$Y_{r}^{\ast 0,j_{1},\ldots,j_{L}}=X_{j_1,\ldots, j_L}$.}

\medskip
\begin{proposition}
We have the following reduction principle
\begin{align}
\cb{Y_{r}^{\ast L-k+1,j_{1},\ldots,j_{k-1}} &
=\sup_{\tau \geq r}\IE_{r}Y_{\tau}^{\ast L-k,j_{1},\ldots,j_{k-1},\tau},\text{
\ \ }r\geq j_{k-1}. \label{rp1}}
\end{align}
\end{proposition}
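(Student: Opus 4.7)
The plan is to prove both inequalities in the reduction principle \eqref{rp1} separately, using the tower property of conditional expectation and the standard lattice (upward directedness) argument for essential suprema.

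For the direction ``$\leq$'', I would pick an arbitrary admissible tuple $r\leq\tau^{k}\leq\cdots\leq\tau^{L}$, set $\tau:=\tau^{k}\geq r$, and use the tower property together with the fact that $(\tau^{k+1},\ldots,\tau^{L})$ is admissible for the inner problem:
\begin{align*}
\IE_{r}X_{j_{1},\ldots,j_{k-1},\tau^{k},\ldots,\tau^{L}}
&=\IE_{r}\IE_{\tau^{k}}X_{j_{1},\ldots,j_{k-1},\tau^{k},\tau^{k+1},\ldots,\tau^{L}}\\
&\leq \IE_{r}\,Y_{\tau^{k}}^{\ast L-k,j_{1},\ldots,j_{k-1},\tau^{k}}
\leq \sup_{\tau\geq r}\IE_{r}\,Y_{\tau}^{\ast L-k,j_{1},\ldots,j_{k-1},\tau}.
\end{align*}
Taking the essential supremum on the left over all admissible $(\tau^{k},\ldots,\tau^{L})$ yields the desired bound.

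For the direction ``$\geq$'', fix any stopping time $\tau\geq r$. The key step is to establish that the family
\[
\cF_{\tau}:=\left\{\IE_{\tau}X_{j_{1},\ldots,j_{k-1},\tau,\tau^{k+1},\ldots,\tau^{L}}:\ \tau\leq\tau^{k+1}\leq\cdots\leq\tau^{L}\right\}
\]
is upward directed. This follows in a standard fashion: given two admissible tuples $(\tau^{k+1},\ldots,\tau^{L})$ and $(\widetilde{\tau}^{k+1},\ldots,\widetilde{\tau}^{L})$, one forms a new admissible tuple componentwise on the $\cF_{\tau}$-measurable set where the first conditional expectation dominates the second, and uses the other tuple on the complement. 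Hence there exists a maximizing sequence $(\tau^{k+1,n},\ldots,\tau^{L,n})$, $n\in\IN$, such that $\IE_{\tau}X_{j_{1},\ldots,j_{k-1},\tau,\tau^{k+1,n},\ldots,\tau^{L,n}}\uparrow Y_{\tau}^{\ast L-k,j_{1},\ldots,j_{k-1},\tau}$ as $n\to\infty$. Integrability of $X$ then permits monotone convergence, and the tower property gives
\begin{align*}
\IE_{r}\,Y_{\tau}^{\ast L-k,j_{1},\ldots,j_{k-1},\tau}
&=\lim_{n\to\infty}\IE_{r}\IE_{\tau}X_{j_{1},\ldots,j_{k-1},\tau,\tau^{k+1,n},\ldots,\tau^{L,n}}\\
&=\lim_{n\to\infty}\IE_{r}X_{j_{1},\ldots,j_{k-1},\tau,\tau^{k+1,n},\ldots,\tau^{L,n}}
\leq Y_{r}^{\ast L-k+1,j_{1},\ldots,j_{k-1}},
\end{align*}
where the last inequality uses that $(\tau,\tau^{k+1,n},\ldots,\tau^{L,n})$ is admissible for the LHS since $\tau\geq r$. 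Taking the supremum over $\tau\geq r$ closes the proof.

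The main obstacle is the upward directedness of $\cF_{\tau}$, which has to be set up carefully because the admissibility condition involves the \emph{ordering} of the remaining stopping times. The componentwise ``maximum'' construction must be carried out in a way that preserves this ordering; this is the only nonroutine step, but it is a standard manipulation in the multiple stopping literature and poses no real difficulty once written out.
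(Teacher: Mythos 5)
Your proof is correct and supplies exactly the argument the paper has in mind: the paper gives no details here, merely asserting that the reduction principle ``can be straightforwardly proved in an inductive manner'' and pointing to the continuous-time analogue in \cite{Quenez2010}. The two-inequality scheme you use (tower property for ``$\leq$'', upward directedness of the family of conditional expectations plus monotone convergence for ``$\geq$'') is the standard way to make that assertion precise; your remark about preserving the ordering when forming the lattice maximum addresses the main point needing care, and the remaining localization of $(\tau^{k+1},\ldots,\tau^{L})$ to the $\mathcal{F}_{j}$-measurable events $\{\tau^{k}=j\}$ in the ``$\leq$'' direction is routine in this discrete, finite-horizon setting.
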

\begin{proof}
This principle can be straightforwardly proved in an inductive manner, but it can also be considered as a discrete time version of a related result in a continuous time setting from \cite{Quenez2010}.
\end{proof}

\medskip

In what follows the following remark turns out to be useful.
\begin{remark}
\label{Doob} {We say that a martingale $(M_{r})_{r\ge p}$ is a Doob martingale
of $(Y_{r})_{r\ge p},$ whenever there exists a predictable process
$(A_{r})_{r\ge p},$ such that $Y_{r}-M_{r}+A_{r}$ is $\mathcal{F}_{p}%
$-measurable for any $r\ge p$. In particular, for any two Doob martingales
$(M_{r})_{r\ge p}$ and $(\widetilde M_{r})_{r\ge p}$ of $(Y_{r})_{r\ge p},$ it
holds $$M_{r}-M_{r^{\prime}}=\widetilde M_{r}-\widetilde M_{r^{\prime}} = \sum_{k=r'}^{r-1} \big( Y_{\cb{k+1}} - \IE_k Y_{k+1} \big) $$ for
any $r \geq r^{\prime}\ge p.$ }
\end{remark}

\medskip

\cb{We can now state and prove a dual representation for the general multiple stopping problem in terms of martingales.}

\begin{theorem}[\textbf{Dual representation}]
\label{dual}

In the setting described above, we have that:

\medskip

\textbf{(i)} \cb{For any $0\leq i \leq \partial$ and }any set of martingales $\left(  M_{r}^{L-k+1,j_{1}%
,...,j_{k-1}}\right)  _{r\geq j_{k-1}},$ where $1\leq k\leq L,$ and $\cb{i=:j_0}\leq
j_{1}\leq\cdot\cdot\cdot\leq j_{k-1}$, it holds
\begin{equation}
Y_{i}^{\ast L}\leq \IE_{i}\max_{i\leq j_{1}\leq\cdot\cdot\cdot\leq j_{L}%
\leq\partial}\left(  X_{j_{1},...,j_{L}}+\sum_{k=1}^{L}\left(  M_{j_{k-1}%
}^{L-k+1,j_{1},...,j_{k-1}}-M_{j_{k}}^{L-k+1,j_{1},...,j_{k-1}}\right)
\right). \label{eq:tmp001}
\end{equation}

\textbf{(ii)} It holds for $i\geq0$
\[
Y_{i}^{\ast L}=\max_{i\leq j_{1}\leq\cdot\cdot\cdot\leq j_{L}\leq\partial
}\left(  X_{j_{1},\ldots,j_{L}}+\sum_{k=1}^{L}\left(  M_{j_{k-1}}^{\ast
L-k+1,j_{1},\ldots,j_{k-1}}-M_{j_{k}}^{\ast L-k+1,j_{1},\ldots,j_{k-1}%
}\right)  \right)
\]
where for $1\leq k\leq L,$ and $\cb{i=:j_0}\leq j_{1}\leq\cdots\leq j_{k-1},$
$\left(  M_{r}^{\ast L-k+1,j_{1},\ldots,j_{k-1}}\right)  _{r\geq j_{k-1}}$ is
a Doob martingale of $\left(  Y_{r}^{\ast L-k+1,j_{1},\ldots,j_{k-1}}\right)
_{r\geq j_{k-1}}.$
\end{theorem}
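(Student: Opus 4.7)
The plan is to derive both parts from the reduction principle \eqref{rp1}, which recasts the $L$-fold stopping value as the Snell envelope of a single-stopping problem whose reward is the $(L-1)$-fold stopping value, combined with the classical Rogers/Haugh--Kogan pathwise dual identity for a single Snell envelope. This naturally suggests an induction on the number $L$ of exercise rights.

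For Part (i), I would fix an admissible tuple $i=:\tau^0\leq\tau^1\leq\cdots\leq\tau^L\leq\partial$ and establish the telescoping identity
\[
\IE_i X_{\tau^1,\ldots,\tau^L} = \IE_i\Bigl[X_{\tau^1,\ldots,\tau^L} + \sum_{k=1}^L\bigl(M^{L-k+1,\tau^1,\ldots,\tau^{k-1}}_{\tau^{k-1}} - M^{L-k+1,\tau^1,\ldots,\tau^{k-1}}_{\tau^k}\bigr)\Bigr].
\]
The crucial step is showing that each summand has vanishing $\IE_i$-expectation in spite of the random martingale indices. I would partition along the finitely many possible values $(j_1,\ldots,j_{k-1})$ of $(\tau^1,\ldots,\tau^{k-1})$: on the $\mathcal{F}_{\tau^{k-1}}$-measurable event $\{\tau^1=j_1,\ldots,\tau^{k-1}=j_{k-1}\}$, the process $(M^{L-k+1,j_1,\ldots,j_{k-1}}_r)_{r\geq j_{k-1}}$ is a genuine martingale and $\tau^k\geq j_{k-1}$ is a stopping time, so optional sampling gives a vanishing $\mathcal{F}_{\tau^{k-1}}$-conditional expectation; summing over the partition and applying the tower property yields the identity. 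Bounding the bracketed expression pathwise by its maximum over all deterministic tuples $i\leq j_1\leq\cdots\leq j_L\leq\partial$ and then taking the essential supremum over $(\tau^1,\ldots,\tau^L)$ on the left gives \eqref{eq:tmp001}.

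For Part (ii), I would argue by induction on $L$. The base case $L=1$ is the classical pathwise dual identity for a single Snell envelope: writing $M^{\ast 1}_r - M^{\ast 1}_i = Y^{\ast 1}_r - Y^{\ast 1}_i + (A_r - A_i)$ as allowed by Remark \ref{Doob}, with $A$ predictable and non-decreasing, one has
\[
X_{j_1} + M^{\ast 1}_i - M^{\ast 1}_{j_1} = Y^{\ast 1}_i - (Y^{\ast 1}_{j_1} - X_{j_1}) - (A_{j_1} - A_i) \leq Y^{\ast 1}_i,
\]
with equality attained at the first $j_1\geq i$ satisfying $Y^{\ast 1}_{j_1}=X_{j_1}$, where the compensator has not yet grown. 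For the inductive step, the reduction principle \eqref{rp1} with $k=1$ identifies $Y^{\ast L}$ with the Snell envelope of the reward process $Z_r:=Y^{\ast L-1,r}_r$, so that its Doob martingale is exactly $(M^{\ast L,\varnothing}_r)_{r\geq i}$. Applying the base case to this outer problem yields
\[
Y^{\ast L}_i = \max_{i\leq j_1\leq\partial}\bigl(Y^{\ast L-1,j_1}_{j_1} + M^{\ast L,\varnothing}_i - M^{\ast L,\varnothing}_{j_1}\bigr) \qquad \text{a.s.}
\]
For each fixed deterministic $j_1$, $Y^{\ast L-1,j_1}_{j_1}$ is the value of the $(L-1)$-fold stopping problem starting at $j_1$ with cashflow $X_{j_1,\cdot,\ldots,\cdot}$, whose auxiliary Snell envelopes and Doob martingales are, after identification, $\{M^{\ast L-k,j_1,j_2,\ldots,j_k}:\,1\leq k\leq L-1\}$. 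The inductive hypothesis gives a pathwise expression for $Y^{\ast L-1,j_1}_{j_1}$ as a maximum over $j_1\leq j_2\leq\cdots\leq j_L\leq\partial$; reindexing the inner sum by shifting $k\to k+1$ and combining with the display above yields the claim. The a.s.\ validity for all finitely many deterministic $j_1$ is preserved by a union-of-null-sets argument.

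The principal obstacle is the careful handling in Part (i) of the random indexing of the martingales---each $M^{L-k+1,\tau^1,\ldots,\tau^{k-1}}$ depends on the stopping times chosen earlier---which is resolved by the partition-by-values argument above. In Part (ii) the decisive structural ingredient is the identification, via \eqref{rp1}, of $Y^{\ast L}$ with the Snell envelope of $Z_r = Y^{\ast L-1,r}_r$, which drives the induction and keeps the bookkeeping of the nested families of martingales manageable.
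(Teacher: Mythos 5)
Your proof is correct. Part (i) follows the paper's argument essentially verbatim: the paper also telescopes the martingale increments and annihilates them via $\IE_i\IE_{\tau^{k-1}}(\cdots)=0$; your partition-by-values justification of optional sampling for the randomly indexed martingales is precisely the detail the paper leaves implicit. Part (ii), however, takes a genuinely different route. The paper fixes an arbitrary deterministic chain $i\leq j_1\leq\cdots\leq j_L\leq\partial$ and performs a single telescoping computation (display \eqref{rp}): each martingale increment is rewritten as the corresponding increment of $Y^{\ast L-k+1,j_1,\ldots,j_{k-1}}$ plus compensator terms, which reduces the whole bracket to $Y_i^{\ast L}$ plus a sum of terms that are nonpositive because, by the reduction principle \eqref{rp1}, each $Y^{\ast L-k+1,j_1,\ldots,j_{k-1}}$ is a supermartingale dominating the virtual cashflow $Y^{\ast L-k,j_1,\ldots,j_{k-1},r}_r$; the reverse inequality then comes from part (i), giving a.s.\ equality from matching conditional expectations. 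You instead induct on $L$: the reduction principle identifies $Y^{\ast L}$ as the Snell envelope of $Z_r=Y^{\ast L-1,r}_r$, the classical single-stopping pathwise dual handles the outer layer, and the inductive hypothesis handles the inner $(L-1)$-fold problem for each frozen $j_1$, with Remark \ref{Doob} guaranteeing that the choice of Doob martingale is immaterial. Both arguments are sound. Your induction is arguably more conceptual --- it exhibits the maximizing chain explicitly via successive first contact times of Snell envelope and reward, and makes transparent that the theorem is the single-stopping dual iterated through the reduction principle. What the paper's computation buys is reusability: the manipulation \eqref{rp} is purely algebraic and does not use optimality of the processes involved, so it is recycled verbatim for arbitrary approximations in the proof of Corollary \ref{cor:tmp_001} (and underlies Theorem \ref{thm:schwer}), whereas your argument invokes the supermartingale and contact-set properties of the true Snell envelopes at every level and so does not directly yield the approximate-envelope version.
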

\begin{proof}
{\textit{\textbf{(i)}}} For the martingale family as stated we have for any chain of stopping
times $0\leq\tau^{1}\leq\cdots\leq\tau^{L}\leq\partial,$%
\begin{align*}
&  \IE_{i}\,\sum_{k=1}^{L}\left(  M_{\tau^{k-1}}^{L-k+1,\tau^{1}%
,...,\tau^{k-1}}-M_{\tau^{k}}^{L-k+1,\tau^{1},\ldots,\tau^{k-1}%
}\right) \\
&  =\sum_{k=1}^{L}\IE_{i}\IE_{\tau^{k-1}}\left(  M_{\tau^{k-1}}^{L-k+1,\tau
^{1},\ldots,\tau^{k-1}}-M_{\tau^{k}}^{L-k+1,\tau^{1},...,\tau^{k-1}%
}\right)  =0,
\end{align*}
hence,
\begin{align*}
Y_{i}^{\ast L}= \sup_{i \leq \tau^{1} \leq \cdots \leq \tau^{L}} \IE_{i}\left(  X_{\tau^{1},...,\tau^{L}}+\sum_{k=1}^{L}\left(  M_{\tau^{k-1}}^{L-k+1,\tau
^{1},\ldots,\tau^{k-1}}-M_{\tau^{k}}^{L-k+1,\tau^{1},...,\tau^{k-1}%
}\right)\right)
\end{align*}
from which {(i)} follows directly.

\bigskip

\textit{\textbf{(ii)}} For any chain $i\leq j_{1}\leq\cdots\leq j_{L}\leq\partial$ we may
write (\cb{recalling }$j_{0}:=i$)
\begin{align}\label{rp}
&X_{j_{1},\ldots,j_{L}} +   \sum_{k=1}^{L}\left(  M_{j_{k-1}}^{\ast L-k+1,j_{1},\ldots,j_{k-1}} - M_{j_{k}}^{\ast L-k+1,j_{1},\ldots,j_{k-1}} \right) \nonumber\\
&  = X_{j_{1},\ldots,j_{L}} + \sum_{k=1}^{L}\left(  Y_{j_{k-1}}^{\ast L-k+1,j_{1},\ldots,j_{k-1}} - Y_{j_{k}}^{\ast L-k+1,j_{1},\ldots,j_{k-1}
}\right) \nonumber\\
&  +\sum_{k=1}^{L}\sum_{l=j_{k-1}}^{j_{k}-1}\left(  \IE_{l}Y_{l+1}^{\ast L-k+1,j_{1},\ldots,j_{k-1}}-Y_{l}^{\ast L-k+1,j_{1},\ldots,j_{k-1}}\right) \nonumber\\
&  =  Y_{i}^{\ast L} +  \sum_{k=1}%
^{\cb{L}}\left(  Y_{j_{k}}^{\ast L-k,j_{1},\ldots,j_{k}} - Y_{j_{k}}^{\ast L-k+1,j_{1},\ldots,j_{k-1}}\right) \nonumber\\
&  +\sum_{k=1}^{L}\sum_{l=j_{k-1}}^{j_{k}-1}\left(  \IE_{l}Y_{l+1}^{\ast L-k+1,j_{1},\ldots,j_{k-1}} - Y_{l}^{\ast L-k+1,j_{1},\ldots,j_{k-1}}\right) .
\end{align}
By the reduction principle (\ref{rp1}) it follows that \cb{$\left(  Y_{r}^{\ast L-k+1,j_{1},\ldots,j_{k-1}%
}\right)  _{r\geq j_{k-1}}$ is a supermartingale which dominates the (virtual)
cash-flow $Y_{r}^{\ast L-k,j_{1},\ldots,j_{k-1},r}$ for $k=1,...,L.$ Hence, }expression (\ref{rp}) is less than or equal to $Y_{i}^{\ast L}.$
 It thus follows that
\[
\max_{i\leq j_{1}\leq\cdot\cdot\cdot\leq j_{L}\leq\partial}\left(
X_{j_{1},\ldots,j_{L}}+\sum_{k=1}^{L}\left(  M_{j_{k-1}}^{\ast L-k+1,j_{1}%
,\ldots,j_{k-1}}-M_{j_{k}}^{\ast L-k+1,j_{1},\ldots,j_{k-1}}\right)  \right)
\leq Y_{i}^{\ast L},
\]
and then, an application of (i) finishes the proof.
\end{proof}

A straightforward consequence of Theorem \ref{dual} is the following \cb{dual representation in terms of approximate Snell
envelopes.}

\begin{corollary}\label{cor:tmp_001}
For any set $\left(Y_{r}^{L-k+1,j_{1},...,j_{k-1}}\right)_{r\geq j_{k-1}}$ of approximations to
the Snell envelopes $\big(  Y_{r}^{\ast L-k+1,j_{1},\ldots,j_{k-1}}\big)_{r\geq j_{k-1}}$ with $Y_{j_{L}}^{0,j_{1},\ldots,j_{L}}:=X_{j_{1}%
,\ldots,j_{L}},$ it holds for $i\geq0$
\begin{align}
Y_{i}^{\ast L}  &  \leq Y_{i}^{L}+\IE_{i}\max_{i\leq j_{1}\leq\cdot\cdot
\cdot\leq j_{L}\leq\partial}\sum_{k=1}^{L}\Bigg(  Y_{j_{k}}^{L-k,j_{1}%
,\ldots,j_{k}}-Y_{j_{k}}^{L-k+1,j_{1},\ldots,j_{k-1}} \nonumber\\
&  +\sum_{l=j_{k-1}}^{j_{k}-1}\left(  \IE_{l}Y_{l+1}^{L-k+1,j_{1}%
,\ldots,j_{k-1}}-Y_{l}^{L-k+1,j_{1},\ldots,j_{k-1}}\right)  \Bigg).
\label{alt}%
\end{align}
Equality holds when the Snell envelopes are plugged in.
\end{corollary}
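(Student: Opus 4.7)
The plan is to apply Theorem~\ref{dual}(i) to a family of martingales built from the given approximate Snell envelopes and then convert the martingale differences into the displayed expression by a telescoping argument. For each tuple $(L-k+1, j_1, \ldots, j_{k-1})$ I would define the martingale $M_r^{L-k+1,j_1,\ldots,j_{k-1}}$ on $r \geq j_{k-1}$ by starting at $0$ and assigning to its one-step increments the Doob-type form from Remark~\ref{Doob}, namely $Y_{r+1}^{L-k+1,j_1,\ldots,j_{k-1}} - \IE_r Y_{r+1}^{L-k+1,j_1,\ldots,j_{k-1}}$. This is an $(\mathcal{F}_r)$-martingale, so Theorem~\ref{dual}(i) furnishes the upper bound $Y_i^{*L} \leq \IE_i \max\,(X_{j_1,\ldots,j_L} + \sum_{k=1}^L (M_{j_{k-1}}^{L-k+1,\ldots} - M_{j_k}^{L-k+1,\ldots}))$.

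Next I would carry out the algebraic step, rewriting each $M_{j_{k-1}}^{L-k+1,\ldots} - M_{j_k}^{L-k+1,\ldots}$ as $(Y_{j_{k-1}}^{L-k+1,\ldots} - Y_{j_k}^{L-k+1,\ldots}) + \sum_{l=j_{k-1}}^{j_k-1}(\IE_l Y_{l+1}^{L-k+1,\ldots} - Y_l^{L-k+1,\ldots})$ via the identity $\sum_{l=j_{k-1}}^{j_k-1}(Y_l - Y_{l+1}) = Y_{j_{k-1}} - Y_{j_k}$. Summing over $k$ with the convention $j_0 := i$ then cross-telescopes the $Y$-endpoint differences across the layers of approximations: the $k=1$ term contributes $Y_i^L$ as leading term, the intermediate points $j_1,\ldots,j_{L-1}$ regroup into the summands $Y_{j_k}^{L-k,j_1,\ldots,j_k} - Y_{j_k}^{L-k+1,j_1,\ldots,j_{k-1}}$, and the boundary convention $Y_{j_L}^{0,j_1,\ldots,j_L} := X_{j_1,\ldots,j_L}$ both supplies the $k=L$ summand and cancels the leading $X_{j_1,\ldots,j_L}$ inside the max. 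Pulling the $\mathcal{F}_i$-measurable $Y_i^L$ out of $\IE_i$ then yields inequality (\ref{alt}).

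For the equality claim, substituting the true Snell envelopes for the approximations makes the constructed martingales coincide, by Remark~\ref{Doob}, with the Doob martingales already used in Theorem~\ref{dual}(ii), so tightness of the bound is immediate from that theorem. Equivalently, one may argue from the reduction principle (\ref{rp1}) that each $Y^{*L-k+1,\ldots}$ is a supermartingale dominating the virtual cashflow $Y^{*L-k,\ldots,r}$, so every term under the max in (\ref{alt}) becomes non-positive, while zero is attained along an optimal primal chain, making the bound sharp. I do not anticipate a genuine obstacle; the only delicate point is keeping track of the cross-telescope simultaneously in the time index $l$ and the remaining-rights index $k$, and invoking the boundary convention for $X_{j_1,\ldots,j_L}$ at precisely the right step so that the cashflow inside the max cancels cleanly.
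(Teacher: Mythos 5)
Your proof is correct and follows essentially the same route as the paper: build the Doob martingales of the approximations $Y^{L-k+1,j_1,\ldots,j_{k-1}}$, apply Theorem~\ref{dual}(i), and use the telescoping identity from Remark~\ref{Doob} together with the boundary convention $Y_{j_L}^{0,j_1,\ldots,j_L}=X_{j_1,\ldots,j_L}$ to rewrite the martingale increments in the form \eqref{alt}, with equality in the exact case following from Theorem~\ref{dual}(ii). The cross-telescoping bookkeeping you describe is precisely the manipulation the paper labels as ``the same manipulations as in (\ref{rp})''.
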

\begin{proof}

Given $\left(Y_{r}^{L-k+1,j_{1},...,j_{k-1}}\right)_{r\geq j_{k-1}}$, we \cb{denote a }corresponding family of Doob
martingales by $\left(M_{r}^{L-k+1,j_{1},...,j_{k-1}}\right)_{r\geq j_{k-1}}$. \cb{Following the same manipulations as in (\ref{rp})
and recalling that by definition $Y_{j_{L}}^{0,j_{1},\ldots,j_{L}}=X_{j_{1}
,\ldots,j_{L}}$, we get }
\begin{eqnarray*}
&&Y_{i}^{L} +  \sum_{k=1}%
^{{L}}\Biggl(\left(  Y_{j_{k}}^{ L-k,j_{1},\ldots,j_{k}} - Y_{j_{k}}^{ L-k+1,j_{1},\ldots,j_{k-1}}\right)
  \\ && \quad +\sum_{l=j_{k-1}}^{j_{k}-1}\left(  \IE_{l}Y_{l+1}^{L-k+1,j_{1},\ldots,j_{k-1}} - Y_{l}^{L-k+1,j_{1},\ldots,j_{k-1}}\right) \Biggr)
\\ &=& X_{j_{1},\ldots,j_{L}} +   \sum_{k=1}^{L}\left(  M_{j_{k-1}}^{ L-k+1,j_{1},\ldots,j_{k-1}} - M_{j_{k}}^{L-k+1,j_{1},\ldots,j_{k-1}} \right)
\end{eqnarray*}
\cb{Hence, the assertion is a mere reformulation of Theorem~\ref{dual}}.

\end{proof}

\medskip

At this point, we stress that the dual representation from Theorem \ref{dual} relies on families
of martingales $\left(M_{r}^{L-k+1,j_{1},...,j_{k-1}}\right)_{r\geq j_{k-1}}$ \cb{whose size is }parametrized
via \cb{the }$(k-1)$-tuples $(j_1,\ldots,j_{k-1}), k=1,\ldots,L$. Hence, depending on
\cb{the time horizon $T$ and the number of exercise rights $L$ a huge  number of
martingales $M^{L-k+1,j_{1},...,j_{k-1}}$, $k=1,\ldots, L$, $0\leq j_1 \leq \ldots\leq j_{k-1}\leq \partial=T+1$
is required in order to compute an upper price bound by the above dual formulation. }
It is thus of great importance to \cb{single out situations, in which a family of optimal martingales can be constructed from
a much smaller family of auxiliary processes. } This will be the topic of Section~\ref{GE}.
\cb{A motivating example in this respect is the standard multiple stopping problem.}


\medskip

\begin{example}[\textbf{Standard multiple stopping}]\label{ex:std} Let $Z$ {be a nonnegative adapted
process with }$Z_{j}=0$ for $j=\partial,$ \cb{i.e. no penalty is imposed for unexercised rights. The standard multiple stopping problem
is to maximize $\IE[\sum_{k=1}^L Z_{\tau^{k}}]$ over the set of ordered stopping times $\tau^{1}\leq \cdots \leq \tau^{L}$ such
that $\tau^{k}<\tau^{k+1}$ or
$\tau^{k}=\tau^{k+1}=\partial$. This means that at most one right can be exercised per day, but an arbitrary number of rights can be left
unexercised, i.e. is exercised at time $\partial$. This problem can be put into our general setting by
considering }the cashflow
\[
X_{i_{1},...,i_{L}}=\begin{cases}
				\sum_{k=1}^{L}Z_{i_{k}}, &\text{if } \cb{i_{j+1}=i_j  \Rightarrow i_j=\partial\, , }\\
											-N,												&\text{else\, ,}
										\end{cases}
\]
\cb{for $N\in \mathbb{N}$. Note that the Snell envelope $Y^{\ast L}_i$ does not depend on the choice of $N$, because it is
never optimal to exercise $X$ in a way which gives a negative payment. Hence, letting $N$ tend to $\infty$,
}Theorem~\ref{dual} yields,

\[
Y_{i}^{\ast L}=\max_{\substack{i\leq j_{1}\leq \cdots\leq j_{L} \\ j_k=j_{k+1}\,\Rightarrow\, j_k=\partial}}\left(  \sum_{k=1}^{L}Z_{j_{k}%
}+\sum_{k=1}^{L}\left(  M_{j_{k-1}}^{\ast L-k+1,j_{1},\ldots,j_{k-1}}%
-M_{j_{k}}^{\ast L-k+1,j_{1},\ldots,j_{k-1}}\right)  \right)
\]

where for $1\leq k\leq L,$ $i\leq j_{1}<\cdot\cdot\cdot<j_{k-1},$
$\left(  M_{r}^{\ast L-k+1,j_{1},\ldots,j_{k-1}}\right)_{r\geq j_{k-1}}$ is
a Doob martingale of
\begin{align*}
Y_{r}^{\ast L-k+1,j_{1},\ldots,j_{k-1}}
=\sum_{p=1}^{k-1}Z_{j_{p}}+\underset{\tau^{p}=\tau^{p+1}\,\Rightarrow\, \tau^{p}=\partial\text{ and }\tau^{k}=j_{k-1}\,\Rightarrow\,j_{k-1}=\partial}{\,\sup_{r\leq \tau^{k}
\leq\cdots\leq\tau
^{L}}}\IE_{r}\sum_{p=k}^{L}Z_{\tau^{p}}.
\end{align*}
for $r\geq j_{k-1}.$
\cb{Define, for $r\geq 0$,
\begin{align*}
Y_{r}^{\ast L-k+1} := \underset{\tau^{p}=\tau^{p+1}\,\Rightarrow\, \tau^{p}=\partial}{\,\sup_{r\leq \tau^{k}%
\cdots\leq \tau^{L}}}\IE_{r}\left( \sum_{p=k}^{L}Z_{\tau^{p}}\right)
\end{align*}
and denote the Doob martingale of $(Y_{r}^{\ast L-k+1})_{r\geq 0}$ by $(M_{r}^{\ast L-k+1})_{r\geq 0}$.
As $$Y_{r}^{\ast L-k+1}-Y_{r}^{\ast L-k+1,j_{1},\ldots,j_{k-1}}$$ is $\mathcal{F}_{j_{k-1}}$-measurable for
$r\geq j_{k-1}$, we can conclude by Remark \ref{Doob} that $(M_{r}^{\ast L-k+1})_{r\geq j_{k-1}}$ is a Doob martingale
of $\left(  Y_{r}^{\ast L-k+1,j_{1},\ldots,j_{k-1}}\right)  _{r\geq j_{k-1}}$. Hence we end up with the dual representation
\[
Y_{i}^{\ast L}=\max_{\substack{i\leq j_{1}\leq \cdots\leq j_{L} \\ j_k=j_{k+1}\,\Rightarrow\, j_k=\partial}}\left(  \sum_{k=1}^{L}Z_{j_{k}%
}+\sum_{k=1}^{L}\left(  M_{j_{k-1}}^{\ast L-k+1}%
-M_{j_{k}}^{\ast L-k+1}\right)  \right)
\]
of \cite{Schoen2010}. Here the potentially large family of optimal  martingales $\left(  M^{\ast L-k+1,j_{1},\ldots,j_{k-1}}\right)$,
$k=1,\ldots, L$, $0\leq j_1,\ldots \leq j_{k-1}\leq \partial$, collapses, in fact, to a family of $L$ martingales, namely the Doob martingales
of $Y^{\ast k}$.}
\end{example}

\bigskip

\section{Generic cashflow with additive and multiplicative structure} \label{GE}

\cb{We now introduce a generic cashflow structure for which the dual representation simplifies in a similar way than for the standard multiple stopping problem in Example \ref{ex:std}. To this end }let us consider for each $k=1,...,L$ and $l=1,...,L-1$ two adapted processes
$U^{k}$ and $V^{l}$. We define a ``pre-cashflow''
\[
\widetilde{X}_{j_{1},...,j_{L}}=\sum_{k=1}^{L}U_{j_{k}}^{k}
{\displaystyle\prod\limits_{l=1}^{k-1}}
V_{j_{l}}^{l},
\]
which \cb{is assumed to satisfy }$\widetilde{X}_{j_{1},...,j_{L}} > -N$ for some (possibly large) $N\in\IN$.
\cb{Concerning the processes $U^{k}$ and $V^{l}$, we suppose that $U^{k}_i$ is integrable for every $k=1,\ldots,L$ and $i=0,\ldots,\partial$, and that
$V^{l}_i$ is strictly positive and bounded from above for every $l=1,\ldots,L-1$ and $i=0,\ldots,\partial$. }
\cb{The multiple stopping problem which we have in mind is to optimally exercise this pre-cashflow under some constraints on the set of admissible stopping times, which we now formulate. We first }define an adapted volume constraint process $v$ with values in
$\{1,...,L\}$ such that $v_{t}$ is the maximum number of rights
one may exercise at $t,$ \cb{and such that $v_{\partial}=L$. In order to formalize this constraint, }we \cb{introduce }for $p\geq 1$ the mapping $\cE_p$ which acts on a non-decreasing $p$-tuple $(j_1,\ldots,j_p)$ by
\begin{align*}
\mathcal{E}_{p}(j_{1},...,j_{p})&:=\#\{r:1\leq r\leq p,\text{ }j_{r}=j_{p}\}.
\end{align*}

Hence, $\cE_p$ denotes the number of rights exercised at $j_{p}$ in the non-decreasing chain $0\leq j_{1}%
\leq\cdot\cdot\cdot\leq j_{p}\leq\partial.$
\cb{Obviously, an ordered chain of stopping times $\tau^{1}\leq \cdots \leq \tau^{L}$ satisfies the volume
 constraint if and only if $\mathcal{E}_{p}(\tau^{1},...,\tau^{p})\leq v_{\tau^{p}}$ for every $p=1,\ldots,L$.
The second constraint, which we want to impose, is a refraction period which specifies the minimal waiting time between
two exercises at different times. We admit random refraction periods, i.e. }at each time $i,$ $0\leq
i<\partial,$ \cb{we fix }a stopping time
$\rho^{i}$ taking values in $\{i+1,...,\partial\}.$  \cb{If at least one right is exercised at time $i$, then the refraction period constraint imposes that the next right must either be exercised at the same time (if consistent with the volume constraint) or otherwise no earlier than $\rho^{i}$. }A standard case is
$\rho^{i}=\left(  i+\delta\right)  \wedge\partial,$ where $1\leq \delta \leq T$ is
deterministic. \cb{Both constraints can be summarized by }the binary $\mathcal{F}_{j_{p}}$-measurable random variable%

\[
\mathcal{C}_{p}(j_{1},...,j_{p}):=\begin{cases}
																		1,	&\forall_{1\leq l\leq p}: ~ \mathcal{E}_{l}(j_{1},...,j_{l})\leq v_{j_l} \text{ and } \forall_{1\leq l\leq p}: ~ j_{l}>j_{l-1}\Longrightarrow j_{l}\geq\rho^{j_{l-1}}\\	
																		0, &\text{else,}
\end{cases}
\]
\cb{which is equal to 1, if and only if the constraints are satisfied when exercising at the $p$ times $j_1\leq \cdots\leq j_p$.}

\cb{The dynamic multiple stopping problem which we now study is
\begin{equation}\label{eq:generic}
Y^{*L}_i=\sup_{\substack{i\leq\tau^{1}\leq\cdots \leq \tau^{L}\leq \partial \\ \mathcal{C}_{L}(\tau^{1},...,\tau^{L})=1 }} \IE_i\left[ \sum_{k=1}^{L}U_{\tau^{k}}^{k}
{\displaystyle\prod\limits_{l=1}^{k-1}}
V_{\tau^{l}}^{l},\right]
\end{equation}
i.e. the supremum is taken over all stopping times with values in $\{i,\ldots,T,\partial\}$ which satisfy the volume constraint and the refraction period constraint. This problem fits in our general (unconstrained) setting by considering the cashflow}
%

\begin{align}\label{eq:tmp018}
X_{j_{1},...,j_{L}}=\begin{cases}
											\widetilde{X}_{j_{1},...,j_{L}},	&\text{if }\mathcal{C}_{L}(j_{1},,...,j_{L})=1,\\
											-N,	&\text{else.}
\end{cases}
\end{align}

To illustrate our motivation for studying the previous cashflow, let us have a look at the following examples.

\medskip


\begin{example}[\textbf{Swing options}]
\cb{We extend the situation in Example \ref{ex:std} by imposing volume constraints and refraction periods as decribed above. Hence, we have}
\begin{align*}
&V^{l}_j := 1, ~ l=1,\ldots,L-1, ~ j=0,\ldots,\partial,\\
&U^{p}_j := Z_j  ~ p=1,\ldots,L, ~ j=0,\ldots,\partial,
\end{align*}
\cb{where we recall that $Z$ is a nonnegative adapted process with $Z_\partial=0$. The multiple stopping problem then becomes
$$
\sup_{\substack{\tau^{1}\leq\cdots \leq \tau^{L} \\ \mathcal{C}_{L}(\tau^{1},...,\tau^{L})=1 }} \IE\left[ \sum_{k=1}^{L} Z_{\tau^{k}}\right],
$$
leading to
\begin{align*}
X_{j_{1},...,j_{L}}=\begin{cases}
											\sum_{k=1}^{L} Z_{j_k},	&\text{if }\mathcal{C}_{L}(j_{1},,...,j_{L})=1,\\
											-N,	&\text{else.}
\end{cases}
\end{align*}
Here any $N\in \mathbb{N}$ can be chosen because $Z$ is  nonnegative.
A dual approach for this multiple stopping problem was studied by \citet{Bender_vol} and \citet{AH2010} under volume constraints,
 but with unit refraction period, i.e. $\rho^{i}=i+1$. The case with non-trivial constant refraction period
is treated in \citet{Bender2010}, but only under unit volume constraint, i.e. $v_i=1$. A typical problem in the
context of electricity markets which leads to this type of multiple stopping problem is the pricing of Swing option contracts, in which volume constraints and refraction periods are often imposed. This option pricing problem will be explained in more detail in our numerical study in Section \ref{sec:num_ex}.}
\end{example}

\medskip

\begin{example}[\textbf{\cb{Exponential utility}}]
\label{ex:exp_utility} \cb{Under the assumptions of the previous example we can also maximize the exponential utility of
 exercising the cashflow $Z_i$ $L$-times while obeying the constraints. Given the risk aversion parameter $\alpha \in (0,\infty)$ the corresponding multiple stopping problem becomes
$$
\sup_{\substack{\tau^{1}\leq\cdots \leq \tau^{L} \\ \mathcal{C}_{L}(\tau^{1},...,\tau^{L})=1 }} \IE\left[ - e^{-\alpha \sum_{k=1}^{L} Z_{\tau^{k}}}\right].
$$

This problem fits in our setting by considering
\begin{align*}
X_{j_{1},...,j_{L}}=\begin{cases}
											\sum_{k=1}^{L}U_{j_{k}}^{k}
{\displaystyle\prod\limits_{l=1}^{k-1}}
V_{j_{l}}^{l},,	&\text{if }\mathcal{C}_{L}(j_{1},,...,j_{L})=1,\\
											-N,	&\text{else}
\end{cases}
\end{align*}
with}
\begin{align*}
&V^{l}_j := e^{-\alpha Z_j}>0 \quad \text{ and } \quad U^{k}_j := \begin{cases}
							0,	&\text{if  } k=1,\ldots,L-1,\\
							\cb{-e^{-\alpha Z_j}},	&\text{if  }k=L,
							\end{cases}
\end{align*}
for $j=0,1,\dots,\partial$ \cb{and $N\geq 2$.}
\end{example}

\medskip
\begin{example}[\textbf{Portfolio liquidation}]
Suppose a (large) investor on a illiquid market wants to sell out
(liquidate) $L$ shares of a stock \cb{during the period }$\{0,...,T\}.$ \cb{We assume that $ \widetilde{S}_{j}>0  $, $j=0,\ldots,T$,  is the virtual stock price process
reflecting the stock price evolution in the absence of the large investor's trading. In the spirit of
\citet{SchSch}, Section 3.1, we model the price impact of the large investor by a resilience function $G$ which we here apply to the log-price. } Hence, the log-stock price $\ln S_{j_{k}}^{j_{1}%
,...,j_{k-1}}$ \cb{at  time $j_k$ of the sale }of the $k$th share, where $k-1$ shares were
already sold at dates $0\leq j_{1}\leq\cdot\cdot\cdot\leq j_{k-1},$ is given
by%
\[
\ln S_{j_{k}}^{j_{1},...,j_{k-1}}=\ln\widetilde{S}_{j_{k}}-\cb{\sum_{l=1}^{k-1}%
G(j_k-j_{l}).}
\]
\cb{We here choose the capped linear resilience function $G(t)=b(1-at)_+$ for constants $a,b >0$. Assuming a short time horizon $T\leq 1/a$ },
the
investor is thus faced with a multiple stopping problem
$$
\cb{\sup_{\tau^{1}\leq\cdots \leq \tau^{L} } \IE\left[\sum_{k=1}^L S_{\tau^{k}}^{\tau^{1},...,\tau^{k-1}} \right],}
$$
\cb{which fits in our framework by applying, for $0\leq j_1\leq\cdots\leq j_L\leq T$, the cashflow}%
\begin{align*}
X_{j_{1},...,j_{L}}  & :=\sum_{k=1}^{L}S_{j_{k}}^{j_{1},...,j_{k-1}}%
=\sum_{k=1}^{L}\widetilde{S}_{j_{k}}\exp\left(  -\sum_{l=1}^{k-1}b\left(
1-a\left(  j_{k}-j_{l}\right)  \right)  \right)  \\
& =\sum_{k=1}^{L}\widetilde{S}_{j_{k}}\exp\left[  b\left(  a\,j_{k}-1\right)
(k-1)\right]
{\displaystyle\prod\limits_{l=1}^{k-1}}
\exp\left(  -abj_{l}\right)  \\
& =\sum_{k=1}^{L}U_{j_{k}}^{k}%
{\displaystyle\prod\limits_{l=1}^{k-1}}
V_{j_{l}}^{l}%
\end{align*}
with%
\[
U_{j}^{k}:=\widetilde{S}_{j}\exp\left[  b\left(  aj-1\right)
(k-1)\right]  \text{ \ \ and \ \ }V_{j}^{l}:=\exp\left(  -abj\right)  .
\]
\cb{(Note that the cemetery time $\partial$ is irrelevant in this setting and we can e.g. set $U^{k}_\partial=0$, $V^{l}_\partial=1$ to make sure that it is never optimal to exercise at this time).}
\end{example}

\bigskip

\cb{Similarly to the situation in Example \ref{ex:std}, we now introduce a family of auxiliary multiple stopping problems $Y_{r}^{\ast L-k+1}$, which are not
parameterized by the times $j_1,\ldots, j_{k-1}$, at which the first rights were exercised. We will then show that a family of optimal martingales
for the original multiple stopping problem (\ref{eq:generic}) can be constructed via the Doob decomposition  of  the auxiliary problems.
This then leads to a simplified dual representation for (\ref{eq:generic}), which can be implemented in practice even when the maturity $T$ and the number
of rights $L$ are large.

Define}

\begin{equation}
Y_{r}^{\ast L-k+1}:=\underset{r\leq\tau^{k}\leq\cdots\leq\tau^{L}%
\text{\ and }\mathcal{C}_{L-k+1}(\tau^{k},...,\tau^{L})=1}{\,\sup
_{\tau^{k},\cdots,\tau^{L}}}\IE_{r}\left(  \sum_{p=k}^{L}%
U_{\tau^{p}}^{p}%
{\displaystyle\prod\limits_{l=k}^{p-1}}
V_{\tau^{l}}^{l}\right) \label{df}%
\end{equation}
with the convention $Y_{r}^{\ast0}:=0.$
\cb{The following proposition states the Bellman principle for this multiple stopping problem.}

\begin{proposition} [\bf Dynamic program]
\label{prop:dynamic_prog}
For $r\ge 0$ and $1\le k\le L$ we have,

\[
Y_{r}^{\ast L-k+1}=\max\left(\IE_{r}Y_{r+1}^{\ast L-k+1},%
\max_{1\leq n\leq v_{r}\,\wedge\,\,(L-k+1)}\left(  \sum_{p=k}%
^{k+n-1}U_{r}^{p}%
{\displaystyle\prod\limits_{l=k}^{p-1}}
V_{r}^{l}+%
{\displaystyle\prod\limits_{l=k}^{k+n-1}}
V_{r}^{l}\IE_{r}Y_{\rho^{r}}^{\ast L-k-n+1}\right)\right)  .
\]
\end{proposition}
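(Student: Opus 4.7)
The proof is a standard dynamic-programming argument, establishing the two inequalities separately. The key structural observations are that (i) $U^{p}_{r}$ and $V^{l}_{r}$ are $\mathcal{F}_{r}$-measurable and $V^{l}_{r}>0$, so they can be factored through any conditional expectation $\IE_{r}$; (ii) the constraint encoded by $\mathcal{C}_{L-k+1}$ decouples neatly at time $r$, because the volume constraint only controls the multiplicity of exercises at a single time, while the refraction constraint enforces that after an exercise at $r$, the next genuinely new exercise occurs at or after $\rho^{r}$.

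For the inequality ``$\geq$'', I would treat the two candidates on the right-hand side separately. For the ``wait'' candidate $\IE_{r}Y_{r+1}^{\ast L-k+1}$, fix an admissible chain $(\tau^{k},\ldots,\tau^{L})$ with $\tau^{k}\geq r+1$ and $\mathcal{C}_{L-k+1}=1$; the same chain is admissible at time $r$, so the tower property $\IE_{r}=\IE_{r}\IE_{r+1}$ yields the desired lower bound. Then pass to the essential supremum, using the standard lattice property of the family of admissible chains. For the ``exercise $n$ at $r$'' candidate, with $1\leq n\leq v_{r}\wedge(L-k+1)$, take an admissible chain $(\sigma^{k+n},\ldots,\sigma^{L})$ for $Y_{\rho^{r}}^{\ast L-k-n+1}$ with $\sigma^{k+n}\geq\rho^{r}$, and splice it with $n$ exercises at $r$:
\[
\tau^{k}=\cdots=\tau^{k+n-1}=r,\qquad \tau^{k+n}=\sigma^{k+n},\ldots,\tau^{L}=\sigma^{L}.
\]
This glued chain satisfies $\mathcal{C}_{L-k+1}$: the volume check at $r$ gives exactly $n\leq v_{r}$, and the refraction jump from $r$ to $\sigma^{k+n}\geq\rho^{r}$ is valid. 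The payoff splits as
\[
\sum_{p=k}^{L}U_{\tau^{p}}^{p}\prod_{l=k}^{p-1}V_{\tau^{l}}^{l}=\sum_{p=k}^{k+n-1}U_{r}^{p}\prod_{l=k}^{p-1}V_{r}^{l}+\prod_{l=k}^{k+n-1}V_{r}^{l}\sum_{p=k+n}^{L}U_{\sigma^{p}}^{p}\prod_{l=k+n}^{p-1}V_{\sigma^{l}}^{l}.
\]
Applying $\IE_{r}=\IE_{r}\IE_{\rho^{r}}$, pulling out the $\mathcal{F}_{r}$-measurable factors, and passing to the essential supremum over the inner chain yields the lower bound.

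For the inequality ``$\leq$'', take an arbitrary admissible chain $(\tau^{k},\ldots,\tau^{L})$ for $Y_{r}^{\ast L-k+1}$. Define the $\mathcal{F}_{r}$-measurable random variable $N:=\#\{p\in\{k,\ldots,L\}:\tau^{p}=r\}$; on the event $\{N=n\}$ the admissibility forces $n\leq v_{r}$ and, if $n\geq 1$, $\tau^{k+n}\geq\rho^{r}$ together with $\mathcal{C}_{L-k-n+1}(\tau^{k+n},\ldots,\tau^{L})=1$. Split the analysis on $\{N=0\}$ (chain is admissible at $r+1$, so $\IE_{r}[\,\cdot\,]\leq\IE_{r}Y_{r+1}^{\ast L-k+1}$) and on $\{N=n\}$ for $n\geq 1$ (after the same payoff decomposition as above, apply the tower property through $\IE_{\rho^{r}}$ and bound the inner conditional expectation by $Y_{\rho^{r}}^{\ast L-k-n+1}$). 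Each piece is dominated by the maximum on the right-hand side, and passing to the essential supremum on the left completes the proof.

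The main technical point is not deep but must be handled carefully: one needs the lattice property of essential suprema to realize the inner optima through increasing sequences of admissible chains, and one must verify that the spliced chain in the ``$\geq$'' direction satisfies $\mathcal{C}_{L-k+1}$ (which reduces, as noted, to checking the single volume test at $r$ and the single refraction jump from $r$ to $\rho^{r}$). The convention $Y_{r}^{\ast 0}=0$ makes the boundary case $n=L-k+1$ consistent, since the continuation term $\prod_{l=k}^{L}V_{r}^{l}\,\IE_{r}Y_{\rho^{r}}^{\ast 0}$ vanishes.
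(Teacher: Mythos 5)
Your proof is correct and follows essentially the same route as the paper: both arguments decompose the problem according to the number $n$ of rights exercised at time $r$, pull the $\mathcal{F}_{r}$-measurable factors $U_{r}^{p}$, $V_{r}^{l}$ out of the conditional expectation, and use the lattice-property interchange of essential supremum and conditional expectation (the paper's identity (\ref{eq:hilf1})) to identify the continuation value $\IE_{r}Y_{\rho^{r}}^{\ast L-k-n+1}$. The paper merely compresses your two inequalities into a single chain of equalities.
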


\begin{proof}
From \eqref{df} we derive straightforwardly,
\cb{
\begin{eqnarray*}
&&Y_{r}^{\ast L-k+1}  \\
& =& \max_{0\leq n\leq v_{r}\,\wedge\,(L-k+1)}
\sup_{\substack{r< \tau^{k+n}\leq\cdots \leq \tau^{L} \\ \mathcal{C}_{L-k+1}(r,\ldots,r,\tau^{k+n},\ldots,\tau^{L})=1 }}
 \IE_{r}\left(  \sum_{p=k}^{k+n-1}U_{r}^{p}%
{\displaystyle\prod\limits_{l=k}^{p-1}}
V_{r}^{l}+\sum_{p=k+n}^{L}U_{\tau^{p}}^{p}%
{\displaystyle\prod_{l=k}^{k+n-1} V_{r}^{l}\prod\limits_{l=k+n}^{p-1}}
V_{\tau^{l}}^{l}\right)  \\
&  =&\max\Biggl( \IE_{r}Y_{r+1}^{\ast L-k+1},
 \max_{1\leq n\leq v_{r}\,\wedge\,(L-k+1)} \Biggl(\sum_{p=k}^{k+n-1}U_{r}^{p} {\displaystyle\prod\limits_{l=k}^{p-1}}
V_{r}^{l} \\&& +\prod_{l=k}^{k+n-1} V_{r}^{l}
\sup_{\substack{\rho^{r}\leq \tau^{k+n}\leq\cdots \leq \tau^{L} \\ \mathcal{C}_{L-k-n+1}(\tau^{k+n},\ldots,\tau^{L})=1 }}
  \IE_{r}\Bigl(\sum_{p=k+n}^{L}U_{\tau^{p}}^{p}%
{\displaystyle\prod\limits_{l=k+n}^{p-1}}
V_{\tau^{l}}^{l}\Bigr) \Biggr)\Biggr).
\end{eqnarray*}
A standard argument shows that
\begin{eqnarray}\label{eq:hilf1}
&& \sup_{\substack{\rho^{r}\leq \tau^{k+n}\leq\cdots \leq \tau^{L} \\ \mathcal{C}_{L-k-n+1}(\tau^{k+n},\ldots,\tau^{L})=1 }}
  \IE_{r}\Bigl(\sum_{p=k+n}^{L}U_{\tau^{p}}^{p}%
{\displaystyle\prod\limits_{l=k+n}^{p-1}}
V_{\tau^{l}}^{l}\Bigr) \nonumber \\ &=&\IE_{r} \sup_{\substack{\rho^{r}\leq \tau^{k+n}\leq\cdots \leq \tau^{L} \\ \mathcal{C}_{L-k-n+1}(\tau^{k+n},\ldots,\tau^{L})=1 }}
  \IE_{\rho^{r}}\Bigl(\sum_{p=k+n}^{L}U_{\tau^{p}}^{p}%
{\displaystyle\prod\limits_{l=k+n}^{p-1}}
V_{\tau^{l}}^{l}\Bigr)=\IE_{r}Y_{\rho^{r}}^{\ast L-k-n+1},
\end{eqnarray}}
\cb{which concludes the proof.}
\end{proof}

We now establish \cb{a }crucial relationship between the Snell envelopes $Y^{\ast
L-k+1,j_{1},\ldots,j_{k-1}}$ and $Y^{\ast L-k+1}$ defined in (\ref{df}). The following Proposition shows that $Y^{\ast
L-k+1,j_{1},\ldots,j_{k-1}}$, parameterized by the $j_k$'s, can be represented in terms of $Y^{\ast L-k+1}$ which avoids the $j_k$'s.
\cb{Notice that, for $k=1$, both Snell envelopes coincide by definition.}
\begin{proposition}\label{Prop10}
\cb{Suppose $1< k\le L+1$. }Under the condition $\mathcal{C}_{k-1}(j_{1},...,j_{k-1})=1$,
we have
\begin{itemize}
\item[(i)]
for $r>j_{k-1}$ it holds
\begin{equation}
Y_{r}^{\ast L-k+1,j_{1},\ldots,j_{k-1}}=\sum_{p=1}^{k-1}U_{j_{p}}^{p}%
{\displaystyle\prod\limits_{l=1}^{p-1}}
V_{j_{l}}^{l}+\IE_{r}Y_{\rho^{j_{k-1}}\vee  r}^{\ast L-k+1}%
{\displaystyle\prod\limits_{l=1}^{k-1}}
V_{j_{l}}^{l}. \label{xp1}%
\end{equation}

\item[(ii)] Further it holds
\begin{eqnarray}
&&Y_{j_{k-1}}^{\ast L-k+1,j_{1},\ldots,j_{k-1}}  =\sum_{p=1}^{k-1}U_{j_{p}%
}^{p}%
{\displaystyle\prod\limits_{l=1}^{p-1}}
V_{j_{l}}^{l}\nonumber\\&+&%
{\displaystyle\prod\limits_{l=1}^{k-1}}
V_{j_{l}}^{l}\label{km1}
\cb{\max_{n\in N(j_1,\ldots,j_{k-1})}}\left\{  \sum_{p=k}^{k-1+n}U_{j_{k-1}}^{p}%
{\displaystyle\prod\limits_{l=k}^{p-1}}
V_{j_{k-1}}^{l}+ \IE_{j_{k-1}}Y_{\rho^{j_{k-1}}}^{\ast L-k+1-n}
{\displaystyle\prod\limits_{l=k}^{k-1+n}}
V_{j_{k-1}}^{l}\right\}
\end{eqnarray}
\cb{where the maximum runs over the $\mathcal{F}_{j_{k-1}}$-measurable set
$$N(j_1,\ldots,j_{k-1}):=\{n;\;0\leq n\leq (v_{j_{k-1}}-\mathcal{E}_{k-1}(j_{1},...,j_{k-1}
))\wedge (L-k+1)\}.$$}

\end{itemize}

\end{proposition}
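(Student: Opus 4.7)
The plan is to unfold the definition of $Y_r^{\ast L-k+1, j_1,\ldots,j_{k-1}}$, exploit the additive--multiplicative structure of the pre-cashflow $\widetilde X$ to separate what is already paid at $j_1,\ldots,j_{k-1}$ from what will be paid at the still-to-be-chosen $\tau^k,\ldots,\tau^L$, reduce the global constraint $\mathcal{C}_L=1$ to a constraint purely on the tail $(\tau^k,\ldots,\tau^L)$ using $\mathcal{C}_{k-1}(j_1,\ldots,j_{k-1})=1$, and finally pull the strictly positive, $\mathcal{F}_{j_{k-1}}$-measurable factor $\prod_{l=1}^{k-1}V^l_{j_l}$ outside of both $\IE_r$ and the essential supremum. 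The reduction of the supremum to an expression of the form $\IE_r[Y^{\ast L-k+1}_{\cdot}]$ will then follow from the tower-property argument already used in~(\ref{eq:hilf1}).

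For part \textbf{(i)}, fix $r>j_{k-1}$. Any admissible chain satisfies $\tau^k\geq r>j_{k-1}$, hence $\tau^k>\tau^{k-1}$, so that refraction forces $\tau^k\geq\rho^{j_{k-1}}$; moreover, since all $j_l$ are strictly less than $\tau^k$, the volume counts satisfy $\mathcal{E}_p(j_1,\ldots,j_{k-1},\tau^k,\ldots,\tau^p)=\mathcal{E}_{p-k+1}(\tau^k,\ldots,\tau^p)$ for $p\geq k$. Under $\mathcal{C}_{k-1}(j_1,\ldots,j_{k-1})=1$ the constraint $\mathcal{C}_L=1$ therefore reduces to $\tau^k\geq\rho^{j_{k-1}}$ together with $\mathcal{C}_{L-k+1}(\tau^k,\ldots,\tau^L)=1$. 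Substituting the decomposition
\[
\widetilde X_{j_1,\ldots,j_{k-1},\tau^k,\ldots,\tau^L} = \sum_{p=1}^{k-1}U^p_{j_p}\prod_{l=1}^{p-1}V^l_{j_l} + \prod_{l=1}^{k-1}V^l_{j_l}\sum_{p=k}^{L}U^p_{\tau^p}\prod_{l=k}^{p-1}V^l_{\tau^l},
\]
pulling the first sum and the positive $\mathcal{F}_{r}$-measurable factor out of $\IE_r$ and out of the essential supremum, and applying the tower identity~(\ref{eq:hilf1}) to the remaining tail supremum gives~(\ref{xp1}).

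For part \textbf{(ii)}, at $r=j_{k-1}$ I would condition on the number $n$ of further rights exercised exactly at $j_{k-1}$, i.e.\ $\tau^k=\cdots=\tau^{k+n-1}=j_{k-1}$, with the convention that no further rights are exercised there when $n=0$. The volume rule at $j_{k-1}$ enforces $\mathcal{E}_{k-1}(j_1,\ldots,j_{k-1})+n\leq v_{j_{k-1}}$, and there is the obvious bound $n\leq L-k+1$; these are exactly the conditions defining $n\in N(j_1,\ldots,j_{k-1})$. Either $n=L-k+1$, in which case nothing remains to schedule, or the residual rights must satisfy $\tau^{k+n}\geq\rho^{j_{k-1}}$ together with $\mathcal{C}_{L-k-n+1}(\tau^{k+n},\ldots,\tau^L)=1$. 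Splitting the cashflow as in part~(i), but now isolating the stacked block $\sum_{p=k}^{k-1+n}U^p_{j_{k-1}}\prod_{l=k}^{p-1}V^l_{j_{k-1}}$ with prefactor $\prod_{l=k}^{k-1+n}V^l_{j_{k-1}}$ in front of the tail supremum, and invoking~(\ref{eq:hilf1}) one more time to identify the tail supremum as $\IE_{j_{k-1}}Y^{\ast L-k-n+1}_{\rho^{j_{k-1}}}$, yields a bracket of the exact form appearing in~(\ref{km1}). Pulling the strictly positive, $\mathcal{F}_{j_{k-1}}$-measurable factor $\prod_{l=1}^{k-1}V^l_{j_l}$ out of the maximum over $n$ completes the argument.

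The main obstacle is not any deep analytic input but the careful combinatorial bookkeeping of the constraints: one must check that both the volume count $\mathcal{E}$ and the refraction rule decouple cleanly into a ``past'' part (absorbed by $\mathcal{C}_{k-1}=1$) and a ``tail'' part that matches exactly the constraints appearing inside~(\ref{df}), the sole residual link between the two being $\tau^k\geq\rho^{j_{k-1}}$. In part~(ii) this is the most delicate step, because when $n\geq 1$ rights are stacked at $j_{k-1}$ the volume count for indices $p\leq k-1+n$ still refers to $j_{k-1}$, so one must carefully track which part of the product $\prod V^l_{\cdot}$ is a common prefactor, which is truly optimized over, and which is absorbed into $\IE_{j_{k-1}}Y^{\ast L-k-n+1}_{\rho^{j_{k-1}}}$.
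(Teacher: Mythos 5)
Your proposal is correct and follows essentially the same route as the paper's own proof: unfold the definition, use the additive--multiplicative structure to factor out the already-realized payments and the $\mathcal{F}_{j_{k-1}}$-measurable product of $V$'s, reduce $\mathcal{C}_L=1$ to the tail constraint plus the link $\tau^{k}\geq\rho^{j_{k-1}}$ (resp.\ the choice of $n\in N(j_1,\ldots,j_{k-1})$ stacked rights at $j_{k-1}$ in part (ii)), and identify the residual supremum via the tower identity (\ref{eq:hilf1}). The only detail glossed over relative to the paper is the explicit indicator split on $\{r<\rho^{j_{k-1}}\}$ versus $\{r\geq\rho^{j_{k-1}}\}$ that produces the $\rho^{j_{k-1}}\vee r$ index in (\ref{xp1}), which your argument clearly contains implicitly.
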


\begin{proof}
\cb{For $k=L+1$ both assertions are implied by the conventions $Y_{r}^{\ast 0,j_{1},\ldots,j_{L}}=X_{j_{1},...,j_{L}}$
and $Y^{\ast 0}_r=0$. Hence we assume for the remainder of the proof that $1< k\leq L$.}\\[0.1cm]
(i)
Under $\mathcal{C}_{k-1}(j_{1},...,j_{k-1})=1$, we have for $r>j_{k-1}$
\begin{eqnarray}
&& Y_{r}^{\ast L-k+1,j_{1},\ldots,j_{k-1}} \nonumber\\  &  =&\sum_{p=1}^{k-1}U_{j_{p}}^{p}%
{\displaystyle\prod\limits_{l=1}^{p-1}}
V_{j_{l}}^{l} +%
{\displaystyle\prod\limits_{l=1}^{k-1}}
V_{j_{l}}^{l}
  \underset{\mathcal{C}_{L}(j_{1},...,j_{k-1},\tau^{k}%
,...,\tau^{L})=1}{\,\sup_{r\leq \tau^{k}\leq\cdots\leq\tau^{L}}}%
\IE_{r}\left(  \sum_{p=k}^{L}U_{\tau^{p}}^{p}%
{\displaystyle\prod\limits_{l=k}^{p-1}}
V_{\tau^{l}}^{l}\right)  . \label{sp1}
\end{eqnarray}
\cb{As $r>j_{k-1\text{ }}$,
we obtain, thanks to (\ref{eq:hilf1}),}
\begin{eqnarray}
&& \underset{\mathcal{C}_{L}(j_{1},...,j_{k-1},\tau^{k}%
,...,\tau^{L})=1}{\,\sup_{r\leq \tau^{k}\leq\cdots\leq\tau^{L}}}%
\IE_{r}\left(  \sum_{p=k}^{L}U_{\tau^{p}}^{p}%
{\displaystyle\prod\limits_{l=k}^{p-1}}
V_{\tau^{l}}^{l}\right)  \nonumber \\
&=&\1_{\{ r<\rho^{j_{k-1}} \} }  \underset{\mathcal{C}_{L-k+1}(\tau^{k}%
,...,\tau^{L})=1}{\,\sup_{\rho^{j_{k-1}}\leq \tau^{k}\leq\cdots\leq\tau^{L}}}%
\IE_{r}\left(  \sum_{p=k}^{L}U_{\tau^{p}}^{p}%
{\displaystyle\prod\limits_{l=k}^{p-1}}
V_{\tau^{l}}^{l}\right)  \nonumber\\
&& + \1_{ \{r\geq\rho^{j_{k-1}} \} } \underset{\mathcal{C}_{L-k+1}(\tau^{k}%
,...,\tau^{L})=1}{\,\sup_{r\leq \tau^{k}\leq\cdots\leq\tau^{L}}}%
\IE_{r}\left(  \sum_{p=k}^{L}U_{\tau^{p}}^{p}%
{\displaystyle\prod\limits_{l=k}^{p-1}}
V_{\tau^{l}}^{l}\right)   \nonumber\\
&  =&\1_{ \{ r<\rho^{j_{k-1}} \}}\IE_{r}Y_{\rho^{j_{k-1}}}^{\ast L-k+1}+\1_{ \{r\geq
\rho^{j_{k-1}} \}}Y_{r}^{\ast L-k+1}. \label{sp2}%
\end{eqnarray}

Hence, by combining (\ref{sp1}) and (\ref{sp2}) we get (i).\\
\  \\
(ii) \cb{Given that the first $(k-1)$ rights have been exercised at times $j_1\leq \cdots \leq j_{k-1}$ the number
of the remaining $(L-k+1)$ rights which are also exercised  at time $j_{k-1}$ must be chosen from the $\mathcal{F}_{j_{k-1}}$-measurable set
$N(j_1,\ldots,j_{k-1})=\{n;\;0\leq n\leq (v_{j_{k-1}}-\mathcal{E}_{k-1}(j_{1},...,j_{k-1}
))\wedge (L-k+1)\}$. These are the only choices which obey the volume constraint at time $j_{k-1}$. Hence,
$$
Y_{j_{k-1}}^{\ast L-k+1,j_{1},\ldots,j_{k-1}}=\max_{n\in N(j_1,\ldots,j_{k-1})}
\sup_{\substack{\rho^{j_{k-1}}\leq \tau^{n+k}\leq \cdots\leq \tau^{L} \\ \mathcal{C}_{L-k-n+1}(\tau^{k+n}%
,...,\tau^{L})=1}} \; \IE X_{j_1,\ldots,j_{k-1},\cdots, j_{k-1},\tau^{k+n},\ldots, \tau^{L}},
$$
where the time index $j_{k-1}$ appears $(n+1)$ times in the $n$th term. It then follows, for fixed $n\in N(j_1,\ldots,j_{k-1})$,
}
\begin{eqnarray*}
&& \sup_{\substack{\rho^{j_{k-1}}\leq \tau^{n+k}\leq \cdots\leq \tau^{L} \\ \mathcal{C}_{L-k-n+1}(\tau^{k+n}%
,...,\tau^{L})=1}} \; \IE X_{j_1,\ldots,j_{k-1},\cdots, j_{k-1},\tau^{k+n},\ldots, \tau^{L}}\\
&  =& \sum_{p=1}^{k-1}U_{j_{p}}^{p}%
{\displaystyle\prod\limits_{l=1}^{p-1}}
V_{j_{l}}^{l}+%
{\displaystyle\prod\limits_{l=1}^{k-1}}
V_{j_{l}}^{l}\sum_{p=k}^{k-1+n}U_{j_{k-1}}^{p}%
{\displaystyle\prod\limits_{l=k}^{p-1}}
V_{j_{k-1}}^{l}\\
&&  +%
{\displaystyle\prod\limits_{l=1}^{k-1}}
V_{j_{l}}^{l}%
{\displaystyle\prod\limits_{l=k}^{k-1+n}}
V_{j_{k-1}}^{l}\sup_{\substack{\rho^{j_{k-1}}\leq \tau^{n+k}\leq \cdots\leq \tau^{L} \\ \mathcal{C}_{L-k-n+1}(\tau^{k+n}%
,...,\tau^{L})=1}}\IE_{j_{k-1}}\sum
_{p=k+n}^{L}U_{\tau^{p}}^{p}%
{\displaystyle\prod\limits_{l=k+n}^{p-1}}
V_{\tau^{p}}^{l}\\
&  =&\sum_{p=1}^{k-1}U_{j_{p}}^{p}%
{\displaystyle\prod\limits_{l=1}^{p-1}}
V_{j_{l}}^{l}+%
{\displaystyle\prod\limits_{l=1}^{k-1}}
V_{j_{l}}^{l}\sum_{p=k}^{k-1+n}U_{j_{k-1}}^{p}%
{\displaystyle\prod\limits_{l=k}^{p-1}}
V_{j_{k-1}}^{l}
  +%
{\displaystyle\prod\limits_{l=1}^{k-1}}
V_{j_{l}}^{l}%
{\displaystyle\prod\limits_{l=k}^{k-1+n}}
V_{j_{k-1}}^{l}\IE_{j_{k-1}}Y_{\rho^{j_{k-1}}}^{\ast L-k-n+1},
\end{eqnarray*}
\cb{making again use of (\ref{eq:hilf1}). This implies \eqref{km1}.}

\end{proof}

\subsection{Dual representation based on Doob decompositions}

The goal of this subsection is to \cb{prove and discuss the following simplified version of the dual representation from Theorem \ref{dual}
for multiple stopping problems of the form (\ref{eq:generic}). }

\begin{theorem}
\label{cor:corollary_dual}
Suppose $Y_{i}^{\ast L}$ is given by (\ref{eq:generic}). Then:

\medskip


\textbf{(i)} For any set of martingales $\left(  M_{r}^{L-k+1}\right)_{r\geq 0}$, $k=1,\ldots,L$, and
any set of \cb{integrable }adapted processes  $\left(  A_{r}^{L-k+1}\right)_{r\geq 0}$, $k=1,\ldots,L$, it holds for $i\geq 0$ and with
$j_0:=i$
\begin{eqnarray*}
Y_{i}^{\ast L}  &  \leq & \IE_i \max_{\substack{i\leq j_{1}\leq\cdot\cdot\cdot\leq j_{L}\leq
\partial\\ \mathcal{C}_L(j_1,...,j_L)=1}}  \Biggl( \sum_{k=1}^{L}U_{j_{k}}^{k}
{\displaystyle\prod\limits_{l=1}^{k-1}}
V_{j_{l}}^{l} \\ &&+\sum_{k=1}^{L}
{\displaystyle\prod\limits_{l=1}^{k-1}}
V_{j_{l}}^{l}
  \left(  M_{j_{k-1}}^{L-k+1}-M_{j_{k}}^{L-k+1}+\IE_{j_{k}%
}A_{\rho^{j_{k-1}}}^{L-k+1}-\IE_{j_{k-1}}A_{\rho^{j_{k-1}}}^{L-k+1}\right) \Biggr).
\end{eqnarray*}

\textbf{(ii)} For every $i\geq 0$ it holds with $j_0:=i$
\begin{eqnarray*}
Y_{i}^{\ast L}  &  = & \max_{\substack{i\leq j_{1}\leq\cdot\cdot\cdot\leq j_{L}\leq
\partial\\ \mathcal{C}_L(j_1,...,j_L)=1}}  \Biggl( \sum_{k=1}^{L}U_{j_{k}}^{k}
{\displaystyle\prod\limits_{l=1}^{k-1}}
V_{j_{l}}^{l} \\ &&+\sum_{k=1}^{L}
{\displaystyle\prod\limits_{l=1}^{k-1}}
V_{j_{l}}^{l}
  \left(  M_{j_{k-1}}^{\ast L-k+1}-M_{j_{k}}^{\ast L-k+1}+\IE_{j_{k}%
}A_{\rho^{j_{k-1}}}^{\ast L-k+1}-\IE_{j_{k-1}}A_{\rho^{j_{k-1}}}^{\ast L-k+1}\right) \Biggr),
\end{eqnarray*}
where $M^{\ast L-k+1}$ , $A^{\ast L-k+1}$ are the martingale part and the predictable part of the Doob decomposition
of the auxiliary Snell envelopes $Y^{\ast L-k+1}$ in (\ref{df}), respectively.
\end{theorem}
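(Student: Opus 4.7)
The plan is to deduce Theorem~\ref{cor:corollary_dual} from the general dual representation of Theorem~\ref{dual} by choosing specific martingales of the form
\[
\widetilde{M}^{L-k+1,j_1,\ldots,j_{k-1}}_r := \prod_{l=1}^{k-1}V^l_{j_l}\Bigl(M^{L-k+1}_r-\IE_r A^{L-k+1}_{\rho^{j_{k-1}}}\Bigr),\qquad r\geq j_{k-1},
\]
for the given $(M^{L-k+1})$ and $(A^{L-k+1})$. For each fixed chain $(j_1,\ldots,j_{k-1})$ this process is a martingale in $r$, because $\prod_l V^l_{j_l}$ is $\mathcal{F}_{j_{k-1}}$-measurable and $\IE_r A^{L-k+1}_{\rho^{j_{k-1}}}$ is the conditional expectation of a fixed integrable random variable. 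A direct telescoping then shows that the increment $\widetilde{M}^{L-k+1,j_1,\ldots,j_{k-1}}_{j_{k-1}}-\widetilde{M}^{L-k+1,j_1,\ldots,j_{k-1}}_{j_k}$ reproduces exactly the bracket appearing in the theorem.

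For part~(i) I would substitute this family into Theorem~\ref{dual}(i). The resulting upper bound involves a maximum over \emph{all} chains $i\leq j_1\leq\cdots\leq j_L\leq\partial$, with $X$ as in~(\ref{eq:tmp018}) taking value $-N$ on $\{\mathcal{C}_L=0\}$. Since $Y^{\ast L}_i$ from~(\ref{eq:generic}) is independent of the choice of the penalty $N$ once $N$ is sufficiently large, and since $\widetilde{X}$ is integrable, one can either fix $N$ so large that the maximum on the right-hand side is automatically attained on $\{\mathcal{C}_L=1\}$ (where $X=\widetilde X$), or equivalently pass to the limit $N\to\infty$ by dominated convergence. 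This yields assertion (i).

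For part~(ii) the main task is to verify that, when $M^{L-k+1}$ and $A^{L-k+1}$ are the martingale and predictable parts of the Doob decomposition of $Y^{\ast L-k+1}$, the corresponding $\widetilde{M}^{\ast L-k+1,j_1,\ldots,j_{k-1}}$ is a Doob martingale of $Y^{\ast L-k+1,j_1,\ldots,j_{k-1}}$ in the sense of Remark~\ref{Doob}, on $\{\mathcal{C}_{k-1}(j_1,\ldots,j_{k-1})=1\}$. For this I would invoke Proposition~\ref{Prop10}(i) together with the optional sampling theorem applied to the bounded stopping time $\rho^{j_{k-1}}\vee r$ to obtain, for $r>j_{k-1}$,
\[
\IE_r Y^{\ast L-k+1}_{\rho^{j_{k-1}}\vee r} = Y^{\ast L-k+1}_0 + M^{\ast L-k+1}_r - \IE_r A^{\ast L-k+1}_{\rho^{j_{k-1}}\vee r},
\]
so that $Y^{\ast L-k+1,j_1,\ldots,j_{k-1}}_r - \widetilde{M}^{\ast L-k+1,j_1,\ldots,j_{k-1}}_r$ equals an $\mathcal{F}_{j_{k-1}}$-measurable quantity plus the predictable increment $\prod_l V^l_{j_l}\bigl(A^{\ast L-k+1}_{\rho^{j_{k-1}}\vee r}-A^{\ast L-k+1}_{\rho^{j_{k-1}}}\bigr)$. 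The boundary point $r=j_{k-1}$ is treated separately by means of Proposition~\ref{Prop10}(ii), noting that every term entering the Doob identity at $r=j_{k-1}$ is already $\mathcal{F}_{j_{k-1}}$-measurable. Plugging the resulting Doob martingales into Theorem~\ref{dual}(ii), and repeating the penalty-limit argument from (i), then gives (ii).

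The main obstacle I expect is the case-by-case verification of the Doob martingale property needed for (ii): the formula of Proposition~\ref{Prop10}(i) is valid only for $r>j_{k-1}$ and only under $\mathcal{C}_{k-1}=1$, while $\widetilde{M}$ involves a stopped quantity $\rho^{j_{k-1}}\vee r$ that interacts non-trivially with the predictable part $A^{\ast L-k+1}$. Careful bookkeeping is required to see that the resulting non-predictable contributions collapse onto the $\mathcal{F}_{j_{k-1}}$-measurable residual and onto the clean increment structure displayed in the theorem.
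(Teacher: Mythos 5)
Your proposal is correct and follows essentially the same route as the paper: the same martingale family (yours differs from the paper's only by an $\mathcal{F}_{j_{k-1}}$-measurable shift, which is irrelevant since only increments enter), the same reduction to Theorem~\ref{dual} via the penalty limit $N\to\infty$, and the same use of Proposition~\ref{Prop10}(i) to identify the Doob martingales for part (ii). The only cosmetic difference is that you obtain the content of the paper's Proposition~\ref{prop:temp_prop1} by optional sampling at $\rho^{j_{k-1}}\vee r$ and the observation that $A^{\ast L-k+1}_{\rho^{j_{k-1}}\vee r}-A^{\ast L-k+1}_{\rho^{j_{k-1}}}$ is predictable, whereas the paper verifies the same identity by splitting on $\{r<\rho^{j_{k-1}}\}$ versus $\{r\geq\rho^{j_{k-1}}\}$.
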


\cb{We here recall that the Doob decomposition of $Y^{\ast L-k+1}$ is the unique decomposition of the form
\[
Y_{r}^{\ast L-k+1}=Y_{0}^{\ast L-k+1}+M_{r}^{\ast L-k+1}-A_{r}^{\ast L-k+1},
\]
where the martingale $M_{r}^{\ast L-k+1}$ and the predictable process $A_{r}^{\ast L-k+1}$ start in zero at time zero. In order
to prove Theorem \ref{cor:corollary_dual} }
we need the following auxiliary result.

\begin{proposition}\label{prop:temp_prop1}
Under the assumption of Theorem \ref{cor:corollary_dual}, a Doob martingale of $ \IE_{r}Y_{\rho^{j_{k-1}}\vee r}^{\ast L-k+1},
$ say $\overline{M}_{r}^{\ast
L-k+1},$ is determined for $r\geq j_{k-1}$  by%
\[
\overline{M}_{r}^{\ast L-k+1}-\overline{M}_{j_{k-1}}^{\ast L-k+1}=M_{r}^{\ast
L-k+1}-M_{j_{k-1}}^{\ast L-k+1}+\IE_{j_{k-1}}A_{\rho^{j_{k-1}}}^{\ast
L-k+1}-\IE_{r}A_{\rho^{j_{k-1}}}^{\ast L-k+1}.
\]
\end{proposition}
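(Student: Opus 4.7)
The plan is to invoke the increment characterisation of a Doob martingale from Remark~\ref{Doob}: any Doob martingale $\overline{M}$ of the process $Z_r := \IE_r Y^{\ast L-k+1}_{\rho\vee r}$, $r\ge j_{k-1}$, where $\rho:=\rho^{j_{k-1}}$, satisfies
\[
\overline{M}_r - \overline{M}_{j_{k-1}} = \sum_{l=j_{k-1}}^{r-1}\bigl(Z_{l+1} - \IE_l Z_{l+1}\bigr),
\]
so it is enough to show that the right-hand side of the claimed identity equals this telescoping sum.

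First I would compute the single-step increment $Z_{l+1} - \IE_l Z_{l+1}$. The tower property gives $\IE_l Z_{l+1} = \IE_l Y^{\ast L-k+1}_{\rho\vee (l+1)}$, and substituting the Doob decomposition $Y^{\ast L-k+1}_r = Y^{\ast L-k+1}_0 + M^{\ast L-k+1}_r - A^{\ast L-k+1}_r$ splits the increment into a martingale and a predictable part. Applying optional sampling to the martingale $M^{\ast L-k+1}$ at the bounded stopping time $\rho\vee(l+1)\ge l+1>l$ yields $\IE_{l+1} M^{\ast L-k+1}_{\rho\vee(l+1)} = M^{\ast L-k+1}_{l+1}$ and $\IE_l M^{\ast L-k+1}_{\rho\vee(l+1)} = M^{\ast L-k+1}_l$, so the martingale part contributes $M^{\ast L-k+1}_{l+1} - M^{\ast L-k+1}_l$.

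For the predictable part I would decompose according to the $\mathcal{F}_l$-measurable events $\{\rho\le l\}$ and $\{\rho\ge l+1\}$. On $\{\rho\le l\}$ one has $\rho\vee(l+1)=l+1$, and predictability makes $A^{\ast L-k+1}_{l+1}$ already $\mathcal{F}_l$-measurable, so this event contributes zero. On $\{\rho\ge l+1\}$ one has $\rho\vee(l+1)=\rho$, giving the contribution $\1_{\{\rho\ge l+1\}}(a_l - a_{l+1})$ with the shorthand $a_l := \IE_l A^{\ast L-k+1}_\rho$. Therefore
\[
Z_{l+1} - \IE_l Z_{l+1} = M^{\ast L-k+1}_{l+1} - M^{\ast L-k+1}_l + \1_{\{\rho>l\}}(a_l - a_{l+1}).
\]

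Summing from $l=j_{k-1}$ to $r-1$ telescopes the martingale part into $M^{\ast L-k+1}_r - M^{\ast L-k+1}_{j_{k-1}}$. The main obstacle is the indicator-weighted sum, which I would rewrite as
\[
\1_{\{\rho>l\}}(a_l - a_{l+1}) = \bigl(\1_{\{\rho>l\}}a_l - \1_{\{\rho>l+1\}}a_{l+1}\bigr) - \1_{\{\rho=l+1\}}a_{l+1};
\]
summation yields $a_{j_{k-1}} - \1_{\{\rho>r\}}a_r - \1_{\{\rho\le r\}}a_\rho$, where I use $\rho>j_{k-1}$ to pin down the lower endpoint. Predictability of $A^{\ast L-k+1}$ makes $A^{\ast L-k+1}_\rho$ $\mathcal{F}_{\rho-1}$-measurable, hence $a_\rho = A^{\ast L-k+1}_\rho$, and on $\{\rho\le r\}$ this random variable is already $\mathcal{F}_r$-measurable, so $\1_{\{\rho\le r\}}a_\rho = \1_{\{\rho\le r\}}\IE_r A^{\ast L-k+1}_\rho$. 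Combined with $\1_{\{\rho>r\}}a_r = \1_{\{\rho>r\}}\IE_r A^{\ast L-k+1}_\rho$, the two indicators merge into one and the sum collapses to $\IE_{j_{k-1}}A^{\ast L-k+1}_\rho - \IE_r A^{\ast L-k+1}_\rho$, exactly reproducing the claimed formula.
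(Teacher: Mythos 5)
Your proof is correct, and it reaches the stated identity by a route that differs from the paper's in mechanics. The paper works globally: it substitutes the Doob decomposition of $Y^{\ast L-k+1}$ directly into $\1_{\{r<\rho\}}\IE_{r}Y_{\rho}^{\ast L-k+1}+\1_{\{r\geq\rho\}}Y_{r}^{\ast L-k+1}$, regroups the result into an $\mathcal{F}_{j_{k-1}}$-measurable term plus the predictable residual $-\1_{\{r\geq\rho\}}\bigl(A_{r}^{\ast L-k+1}-A_{\rho}^{\ast L-k+1}\bigr)$ plus the martingale $M_{r}^{\ast L-k+1}-M_{j_{k-1}}^{\ast L-k+1}-\IE_{r}A_{\rho}^{\ast L-k+1}$, and then just reads off the Doob martingale from the definition. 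You instead work locally: you invoke the increment characterisation of Remark~\ref{Doob}, compute each one-step increment $Z_{l+1}-\IE_{l}Z_{l+1}$ via optional sampling and the $\mathcal{F}_{l}$-measurable split $\{\rho\le l\}$ versus $\{\rho>l\}$, and then telescope. What the paper's route buys is brevity and an explicit exhibit of the predictable compensator (at the price of having to check that $\1_{\{r\geq\rho\}}\bigl(A_{r}^{\ast}-A_{\rho}^{\ast}\bigr)$ really is predictable, which it silently does); what your route buys is that you never need to verify predictability of any residual --- the increment formula absorbs that --- at the price of the slightly fiddly telescoping of $\1_{\{\rho>l\}}(a_{l}-a_{l+1})$, which you handle correctly, including the merge of $\1_{\{\rho>r\}}a_{r}+\1_{\{\rho\le r\}}a_{\rho}$ into $\IE_{r}A_{\rho}^{\ast L-k+1}$ and the use of $\rho^{j_{k-1}}>j_{k-1}$ to fix the boundary term. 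Both arguments rest on the same two ingredients (the Doob decomposition of $Y^{\ast L-k+1}$ and the indicator split around $\rho$), so the difference is one of bookkeeping rather than of substance, but your version is a legitimate and complete alternative.
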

\begin{proof}
Using  \cb{the Doob decomposition }we may write%
\begin{align}
& \quad\;\IE_{r}Y_{\rho^{j_{k-1}}\vee r}^{\ast L-k+1} \nonumber \\
&=\1_{\{r<\rho^{j_{k-1}} \}}\IE_{r}Y_{\rho^{j_{k-1}}}^{\ast L-k+1}+\1_{\{r\geq
\rho^{j_{k-1}} \} }Y_{r}^{\ast L-k+1} \nonumber\\
&= \1_{\{ j_{k-1}\leq r<\rho^{j_{k-1}} \} }\left(  Y_{j_{k-1}}^{\ast
L-k+1}+M_{r}^{\ast L-k+1}-M_{j_{k-1}}^{\ast L-k+1}-\IE_{r}A_{\rho^{j_{k-1}}%
}^{\ast L-k+1}+A_{j_{k-1}}^{\ast L-k+1}\right) \nonumber\\
& \quad +\1_{\{r\geq\rho^{j_{k-1}} \} }\left(  Y_{j_{k-1}}^{\ast L-k+1}+M_{r}^{\ast
L-k+1}-M_{j_{k-1}}^{\ast L-k+1}-A_{r}^{\ast L-k+1}+A_{j_{k-1}}^{\ast
L-k+1}\right) \nonumber\\
&= Y_{j_{k-1}}^{\ast L-k+1}+M_{r}^{\ast L-k+1}-M_{j_{k-1}}^{\ast L-k+1}\nonumber\\
& \quad -\1_{\{ j_{k-1}\leq r<\rho^{j_{k-1}} \} }\IE_{r}A_{\rho^{j_{k-1}}}^{\ast
L-k+1}-\1_{\{ r\geq\rho^{j_{k-1}} \}}A_{r}^{\ast L-k+1}+A_{j_{k-1}}^{\ast L-k+1}\nonumber\\
&= Y_{j_{k-1}}^{\ast L-k+1} + A_{j_{k-1}}^{\ast L-k+1} -\1_{\{r\geq\rho^{j_{k-1}} \} } \big( A_{r}^{\ast L-k+1} - A_{\rho^{j_{k-1}}}^{\ast L-k+1} \big) \label{eq:tmp002}\\
& \quad + M_{r}^{\ast L-k+1}-M_{j_{k-1}}^{\ast L-k+1} - \IE_{r}A_{\rho^{j_{k-1}}}^{\ast
L-k+1}.\label{eq:tmp004}
\end{align}

Since line \eqref{eq:tmp002} is the sum of a $\cF_{k-1}$-measurable \cb{random variable }and a predictable \cb{process }and
 line \eqref{eq:tmp004} is a martingale, the proposition follows.

\end{proof}

\cb{We  now can prove the dual representation.}

\begin{proof}[Proof of Theorem \ref{cor:corollary_dual}]
\textit{\textbf{(i)}}
\cb{Suppose that, for $k=1,\ldots,L$, $M^{L-k+1}$ is a martingale and $A^{L-k+1}$ is an adapted and integrable process. Then, the }
process $M_{r}^{L-k+1,j_{1},\ldots,j_{k-1}}$ defined for $r\geq j_{k-1}$ via
\begin{eqnarray*}
&&\cb{M_{r}^{L-k+1,j_{1},\ldots,j_{k-1}}}:=
 \prod_{l=1}^{k-1}  V_{j_{l}}^{l}\left(  M_{r}^{L-k+1}-M_{j_{k-1}}^{L-k+1}+\cb{\IE_{j_{k-1}}A_{\rho^{j_{k-1}}}^{
L-k+1}-\IE_{r}A_{\rho^{j_{k-1}}}^{L-k+1}}\right)
\end{eqnarray*}
is a martingale \cb{due to the boundedness of the $V^{l}$'s. By Theorem \ref{dual}-\textbf{\textit{(i)}}, we have
\begin{eqnarray*}
Y_{i}^{\ast L}&\leq& \IE_{i}\max_{i\leq j_{1}\leq\cdot\cdot\cdot\leq j_{L}%
\leq\partial}\left(  X_{j_{1},...,j_{L}}+\sum_{k=1}^{L}\left(  M_{j_{k-1}%
}^{L-k+1,j_{1},...,j_{k-1}}-M_{j_{k}}^{L-k+1,j_{1},...,j_{k-1}}\right)
\right) \\ &=& \IE_{i}\max_{i\leq j_{1}\leq\cdot\cdot\cdot\leq j_{L}%
\leq\partial}\Biggl(  X_{j_{1},...,j_{L}}\\&& +\sum_{k=1}^{L}
{\displaystyle\prod\limits_{l=1}^{k-1}}
V_{j_{l}}^{l}
  \left(  M_{j_{k-1}}^{L-k+1}-M_{j_{k}}^{L-k+1}+\IE_{j_{k}%
}A_{\rho^{j_{k-1}}}^{L-k+1}-\IE_{j_{k-1}}A_{\rho^{j_{k-1}}}^{L-k+1}\right)\Biggr)
\end{eqnarray*}
with $X$ as defined in (\ref{eq:tmp018}) for sufficiently large $N\in \mathbb{N}$. Letting $N$ tend to infinity, we observe
that maximization only takes place over those $j_1\leq\cdots\leq j_L$ which satisfy $\mathcal{C}_L(j_1,\ldots,j_L)=1$. Plugging in
the definition of $X$ for those $j_1\leq\cdots\leq j_L$ yields the assertion.}
\bigskip

\textit{\textbf{(ii)}} \cb{We now apply Theorem \ref{dual}-\textbf{\textit{(ii)}} for $X$ as defined in (\ref{eq:tmp018}) with sufficiently large $N\in \mathbb{N}$.
Letting $N$ tend to infinity again and substituting the definition of $X$, we obtain,
\begin{eqnarray*}
Y_{i}^{\ast L}  &  = & \max_{\substack{i\leq j_{1}\leq\cdot\cdot\cdot\leq j_{L}\leq
\partial\\ \mathcal{C}_L(j_1,...,j_L)=1}}  \Biggl( \sum_{k=1}^{L}U_{j_{k}}^{k}
{\displaystyle\prod\limits_{l=1}^{k-1}}
V_{j_{l}}^{l} +\sum_{k=1}^{L}\left(  M_{j_{k-1}}^{\ast
L-k+1,j_{1},\ldots,j_{k-1}}-M_{j_{k}}^{\ast L-k+1,j_{1},\ldots,j_{k-1}%
}\right) \Biggr),
\end{eqnarray*}
whenever $\left(  M_{r}^{\ast L-k+1,j_{1},\ldots,j_{k-1}}\right)  _{r\geq j_{k-1}}$ are
Doob martingales of $\left(  Y_{r}^{\ast L-k+1,j_{1},\ldots,j_{k-1}}\right)
_{r\geq j_{k-1}}$. By Proposition \ref{Prop10}-\textbf{\textit{(i)}} and Proposition \ref{prop:temp_prop1} we can take}
\begin{eqnarray*}
&&M_{j_{k-1}}^{\ast
L-k+1,j_{1},\ldots,j_{k-1}}-M_{j_{k}}^{\ast L-k+1,j_{1},\ldots,j_{k-1}} \\ &=& {\displaystyle\prod\limits_{l=1}^{k-1}}
V_{j_{l}}^{l}
  \left(  M_{j_{k-1}}^{\ast L-k+1}-M_{j_{k}}^{\ast L-k+1}+\IE_{j_{k}%
}A_{\rho^{j_{k-1}}}^{\ast L-k+1}-\IE_{j_{k-1}}A_{\rho^{j_{k-1}}}^{\ast L-k+1}\right).
\end{eqnarray*}
\end{proof}

\medskip
\cb{Theorem \ref{cor:corollary_dual} gives a straightforward generic way to calculate upper bounds for multiple
stopping problems of the form (\ref{eq:generic}) at time $i=0$ via Monte Carlo by performing the following steps in a Markovian setting:
\begin{enumerate}
 \item Solve the dynamic program in Proposition \ref{prop:dynamic_prog} for the auxiliary problems $Y^{\ast L-k+1}$
 approximately, and let  $\hat Y^{L-k+1},$ $k=1,\ldots, L,$ denote the respective approximations.
\item Perform the Doob decomposition of $\hat Y^{L-k+1}$, $k=1,\ldots, L$, numerically, e.g. by one layer of nested Monte Carlo
as suggested by \citet{AB2004} in the context of options with a single early exercise right.
\item Plug the processes which stem from the numerical Doob decomposition into the formula of Theorem \ref{cor:corollary_dual}-\textit{\textbf{(i)}}
and replace the outer expectation by the sample mean.
\end{enumerate}
This program will be carried out in more detail in Section \ref{sec:num_ex} in the context of Swing options.

Notice that for a large maturity and a large number of exercise rights, the pathwise maximum in the dual
representation of  Theorem \ref{cor:corollary_dual} runs over a huge set. We will now show that, due to the special structure of the payoff in (\ref{eq:generic}),
this maximum can be computed efficiently
by a recursion over the time steps and exercise levels.}

\medskip
Given any $L$-tuple of martingales $M=(M^{1},\ldots, M^{L})$ and any $L$-tuple of
 adapted processes $A=(A^{1},\ldots, A^{L})$, define, for $n=0,\ldots, L$ and $i=0,\ldots,\partial$,
\begin{eqnarray*}
 \theta^{n,L}_i(M,A)&:=&\max_{\substack{j_0=i\leq j_1\leq\cdots\leq j_{L-n} \\ \mathcal{C}_{L-n}(j_{1},...,j_{L-n}) = 1}}
 \sum_{k=1}^{L-n} \left(\prod_{l=1}^{k-1} V^{l+n}_{j_l} \right) \Bigl(U^{n+k}_{j_k}-\Big(M^{L-n-k+1}_{j_k}-M^{L-n-k+1}_{j_{k-1}}\Big) \\
&& +{\1}_{\{k>1\,\,  \wedge \,\, j_k>j_{k-1}\}} (A^{L-k-n+1}_{\rho^{j_{k-1}}} - \IE_{j_{k-1}}A^{L-k-n+1}_{\rho^{j_{k-1}}}) \Bigr).
\end{eqnarray*}
By  Theorem \ref{cor:corollary_dual},
$$
Y^{* L}_0\leq \IE[\theta^{0,L}_0(M,A)]
$$
for any pair of $L$-tuples $(M,A)$, and
$$
Y^{* L}_0 = \theta^{0,L}_0(M^*,A^*)
$$
for an optimal pair of  $L$-tuples $(M^*,A^*)$. \cbx{Generalizing a related formula in \cite{Mah} in the context of
flexible (or chooser) caps, }
the expression  $\theta^{0,L}_0(M,A)$ can be recursively calculated by the following
proposition.
\begin{proposition}
\label{prop:recursive_max}
\cb{For every $L$-tuple of martingales $M=(M^{1},\ldots, M^{L})$ and  $L$-tuple of adapted processes $A=(A^{1},\ldots, A^{L})$
it holds for $i=0,\ldots,T$ and $n=0,\ldots,L$,}
\begin{eqnarray*}
  \theta^{n,L}_i(M,A)&=& \max\Bigl\{ \theta^{n,L}_{i+1}(M,A)-(M^{L-n}_{i+1}-M^{L-n}_i),
\max_{\nu=1,\ldots,v_i\wedge(L-n)}  \sum_{k=1}^\nu  \left(\prod_{l=1}^{k-1} V^{l+n}_{i} \right)U^{n+k}_i \\&&+
\left(\prod_{\lambda=1}^{\nu} V^{\lambda+n}_{i} \right)\Bigl(\theta^{n+\nu,L}_{\rho^i}(M,A)-(M^{L-n-\nu}_{\rho^i}-M^{L-n-\nu}_i)
\\ && +A^{L-n-\nu}_{\rho^i}-\IE_i A^{L-n-\nu}_{\rho^i} \Bigr)  \Bigr\},
 \end{eqnarray*}
with
$$
 \theta^{n,L}_\partial(M,A)=\sum_{k=1}^{L-n} \left(\prod_{l=1}^{\nu} V^{l+n}_{\partial} \right)U^{n+k}_\partial.
$$
\end{proposition}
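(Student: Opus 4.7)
The plan is to prove the recursion by splitting the maximum defining $\theta^{n,L}_i(M,A)$ according to the behaviour of the optimal sequence at time $i$. Any admissible $(j_1,\ldots,j_{L-n})$ with $j_0=i$ either satisfies $j_1\geq i+1$ (no exercise at time $i$), or is of the form $j_1=\cdots=j_\nu=i$ and, when $\nu<L-n$, $j_{\nu+1}\geq\rho^i$, for some $\nu\in\{1,\ldots,v_i\wedge(L-n)\}$ ($\nu$ rights exercised at time $i$). The volume constraint caps $\nu$ by $v_i$, and the refraction constraint enforces $j_{\nu+1}\geq\rho^i$ once we strictly leave time $i$.

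For the "no exercise" branch, only the $k=1$ summand is sensitive to the change of starting point: its martingale increment $M^{L-n}_{j_0}-M^{L-n}_{j_1}=(M^{L-n}_i-M^{L-n}_{i+1})+(M^{L-n}_{i+1}-M^{L-n}_{j_1})$ splits into an $\mathcal{F}_i$-measurable piece that factors out of the maximum and a remainder that coincides with the $k=1$ increment inside $\theta^{n,L}_{i+1}(M,A)$; the $A$-indicator vanishes since $k>1$ is required. This branch therefore contributes $\theta^{n,L}_{i+1}(M,A)-(M^{L-n}_{i+1}-M^{L-n}_i)$.

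In the "exercise $\nu$" branch, for $k=1,\ldots,\nu$ one has $j_{k-1}=j_k=i$, so the martingale increments collapse and the indicators are zero, leaving the immediate reward $\sum_{k=1}^{\nu}\bigl(\prod_{l=1}^{k-1}V^{l+n}_i\bigr)U^{n+k}_i$. For the tail $k=\nu+1,\ldots,L-n$ I re-index via $k':=k-\nu$, $n':=n+\nu$, $\tilde j_{k'}:=j_{k'+\nu}$ and factor out $\prod_{l=1}^{\nu}V^{l+n}_i$. The $k=\nu+1$ term is the delicate one: since $\tilde j_1\geq\rho^i>i=j_\nu$, the indicator is active and produces $A^{L-n-\nu}_{\rho^i}-\IE_i A^{L-n-\nu}_{\rho^i}$, while its martingale increment $M^{L-n-\nu}_i-M^{L-n-\nu}_{\tilde j_1}$ splits as $(M^{L-n-\nu}_i-M^{L-n-\nu}_{\rho^i})+(M^{L-n-\nu}_{\rho^i}-M^{L-n-\nu}_{\tilde j_1})$, so that the first piece factors out of the tail maximum and the second piece matches the $k'=1$ increment inside $\theta^{n+\nu,L}_{\rho^i}(M,A)$. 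For $k'\geq 2$ the alignment is automatic, since $\rho^{j_{k-1}}=\rho^{\tilde j_{k'-1}}$ and the volume and refraction constraints on $(\tilde j_1,\ldots,\tilde j_{L-n-\nu})$ coincide with those in the definition of $\theta^{n+\nu,L}_{\rho^i}(M,A)$. Taking the maximum over the two branches (and over $\nu$) yields the recursion; the terminal condition at $i=\partial$ is immediate because the only admissible sequence is $j_1=\cdots=j_{L-n}=\partial$, on which every martingale increment and $A$-indicator vanishes.

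The main obstacle will be the simultaneous bookkeeping of the two correction terms generated at the transition $j_\nu=i\to\tilde j_1\geq\rho^i$, namely the martingale shift $-(M^{L-n-\nu}_{\rho^i}-M^{L-n-\nu}_i)$ and the $A$-term $A^{L-n-\nu}_{\rho^i}-\IE_i A^{L-n-\nu}_{\rho^i}$; once both are correctly identified and pulled outside the tail maximum, the rest is a routine re-indexing exercise. The boundary case $\nu=L-n$ is handled consistently via the empty-sum conventions $\theta^{L,L}_r(M,A)=0$ and $M^0\equiv A^0\equiv 0$, under which the bracket in the $\max_\nu$ expression collapses to the immediate reward as it should.
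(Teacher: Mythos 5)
Your proposal is correct and follows essentially the same route as the paper's proof: split the maximum according to the number $\nu$ of rights exercised at time $i$, absorb the $\mathcal{F}_i$-measurable parts of the split martingale increments (and, for $\nu\geq 1$, the activated $A$-correction at the transition to $\rho^i$) outside the remaining maximum, and re-index the tail to recognize $\theta^{n,L}_{i+1}$ respectively $\theta^{n+\nu,L}_{\rho^i}$. No substantive differences from the paper's argument.
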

\begin{proof}
 The formula for $ \theta^{n,L}_\partial(M,A)$ is obvious by definition. In order to prove the recursive formula, we denote
\begin{eqnarray*}
 F^{n,L}(j_0,\ldots, j_{L-n})&=& \sum_{k=1}^{L-n} \left(\prod_{l=1}^{k-1} V^{l+n}_{j_l} \right) \Bigl(U^{n+k}_{j_k}-(M^{L-n-k+1}_{j_k}-M^{L-n-k+1}_{j_{k-1}}) \\
&& +{\1}_{\{k>1\,\, \wedge \,\, j_k>j_{k-1}\}} (A^{L-k-n+1}_{\rho^{j_{k-1}}} - \IE_{j_{k-1}}A^{L-k-n+1}_{\rho^{j_{k-1}}}) \Bigr).
\end{eqnarray*}
Then,
\begin{equation}\label{recursion1}
 \theta^{n,L}_i(M,A)=\max_{\nu=0,\ldots,v_i\wedge(L-n)} \Bigl\{
 \max_{\substack{j_0=\cdots=j_\nu=i< j_{\nu+1}\leq\cdots\leq j_{L-n} \\ \mathcal{C}_{L-n}(j_{1},...,j_{L-n}) = 1}} F^{n,L}(j_0,\ldots, j_{L-n})\Bigr\}.
\end{equation}
For $\nu=0$ we get
\begin{eqnarray}\label{recursion2}
 && \max_{\substack{j_0=i< j_{1}\leq\cdots\leq j_{L-n} \\ \mathcal{C}_{L-n}(j_{1},...,j_{L-n}) = 1}} F^{n,L}(j_0,\ldots, j_{L-n})
\nonumber\\ &=& \max_{\substack{j_0=i+1\leq j_{1}\leq\cdots\leq j_{L-n} \\ \mathcal{C}_{L-n}(j_{1},...,j_{L-n}) = 1}} F^{n,L}(j_0,\ldots, j_{L-n}) -(M^{L-n}_{i+1}-M^{L-n}_i)\nonumber\\ &=&\theta^{n,L}_{i+1}(M,A)-(M^{L-n}_{i+1}-M^{L-n}_i).
\end{eqnarray}
For $\nu>0$ we obtain
\begin{eqnarray*}
 && \max_{\substack{j_0=\cdots=j_\nu=i< j_{\nu+1}\leq\cdots\leq j_{L-n} \\ \mathcal{C}_{L-n}(j_{1},...,j_{L-n}) = 1}} F^{n,L}(j_0,\ldots, j_{L-n}) \\
&=& \sum_{k=1}^\nu  \left(\prod_{l=1}^{k-1} V^{l+n}_{i} \right)U^{n+k}_i \\&& +
\left(\prod_{\lambda=1}^{\nu} V^{\lambda+n}_{i} \right) \max_{\substack{j_\nu=i,\, \rho^i\leq j_{\nu+1}\leq\cdots\leq j_{L-n} \\ \mathcal{C}_{L-n-\nu}(j_{\nu+1},...,j_{L-n}) = 1}}
\sum_{k=\nu+1}^{L-n} \left(\prod_{l=\nu+1}^{k-1} V^{l+n}_{j_l} \right) \Bigl(U^{n+k}_{j_k}-M^{L-n-k+1}_{j_k} \\
&& \quad\quad +M^{L-n-k+1}_{j_{k-1}}+{\1}_{\{ j_k>j_{k-1}\}} (A^{L-k-n+1}_{\rho^{j_{k-1}}}
- \IE_{j_{k-1}}A^{L-k-n+1}_{\rho^{j_{k-1}}}) \Bigr) \\
&=& \sum_{k=1}^\nu  \left(\prod_{l=1}^{k-1} V^{l+n}_{i} \right)U^{n+k}_i
\\ &&+ \left(\prod_{\lambda=1}^{\nu} V^{\lambda+n}_{i} \right)
\left( (A^{L-\nu-n}_{\rho^{i}} - \IE_{i}A^{L-\nu-n}_{\rho^{i}}) - (M^{L-\nu-n}_{\rho^{i}}-M^{L-\nu-n}_i) \right)
 \\&& +
\left(\prod_{\lambda=1}^{\nu} V^{\lambda+n}_{i} \right) \max_{\substack{j_\nu=\rho^i\leq j_{\nu+1}\leq\cdots\leq j_{L-n} \\ \mathcal{C}_{L-n-\nu}(j_{\nu+1},...,j_{L-n}) = 1}}
\sum_{k=\nu+1}^{L-n} \left(\prod_{l=\nu+1}^{k-1} V^{l+n}_{j_l} \right) \Bigl(U^{n+k}_{j_k}-M^{L-n-k+1}_{j_k} \\
&& \quad\quad +M^{L-n-k+1}_{j_{k-1}}+{\1}_{\{k>\nu+1\,\, \wedge \,\, j_k>j_{k-1}\}} (A^{L-k-n+1}_{\rho^{j_{k-1}}} - \IE_{j_{k-1}}A^{L-k-n+1}_{\rho^{j_{k-1}}}) \Bigr) \\
&=& \sum_{k=1}^\nu  \left(\prod_{l=1}^{k-1} V^{l+n}_{i} \right)U^{n+k}_i
\\ &&+ \left(\prod_{\lambda=1}^{\nu} V^{\lambda+n}_{i} \right) \left( (A^{L-\nu-n}_{\rho^{i}} - \IE_{i}A^{L-\nu-n}_{\rho^{i}}) - (M^{L-\nu-n}_{\rho^{i}}-M^{L-\nu-n}_i) \right)
 \\&& +
\left(\prod_{\lambda=1}^{\nu} V^{\lambda+n}_{i} \right) \max_{\substack{j_0=\rho^i\leq j_{1}\leq\cdots\leq j_{L-n-\nu} \\ \mathcal{C}_{L-n-\nu}(j_1,...,j_{L-n-\nu}) = 1}}
\sum_{k=1}^{L-n-\nu} \left(\prod_{l=1}^{k-1} V^{l+n+\nu}_{j_l} \right) \Bigl(U^{n+\nu+k}_{j_k}-M^{L-n-\nu-k+1}_{j_k}\\
&& \quad\quad +M^{L-n-\nu-k+1}_{j_{k-1}} +{\1}_{\{k>1\,\, \wedge \,\, j_k>j_{k-1}\}} (A^{L-k-n-\nu+1}_{\rho^{j_{k-1}}} - \IE_{j_{k-1}}A^{L-k-n-\nu+1}_{\rho^{j_{k-1}}}) \Bigr) \\
&=& \sum_{k=1}^\nu  \left(\prod_{l=1}^{k-1} V^{l+n}_{i} \right)U^{n+k}_i
\\ && + \left(\prod_{\lambda=1}^{\nu} V^{\lambda+n}_{i} \right) \left( (A^{L-\nu-n}_{\rho^{i}} - \IE_{i}A^{L-\nu-n}_{\rho^{i}}) - (M^{L-\nu-n}_{\rho^{i}}-M^{L-\nu-n}_i)\right) \\
&& + \left(\prod_{\lambda=1}^{\nu} V^{\lambda+n}_{i} \right) \theta^{n+\nu,L}_{\rho^i}(M,A).
\end{eqnarray*}
Plugging this identity and (\ref{recursion2}) into (\ref{recursion1}) yields the assertion.
\end{proof}

\medskip

\subsection{Dual representation based on Snell envelopes}
\cb{In this subsection we present a simplified version of the dual representation in Corollary \ref{cor:tmp_001} in terms of
approximate Snell envelopes for the multiple stopping problem of the form (\ref{eq:generic}). It reads as follows.}

\begin{theorem}\label{thm:schwer}
\cb{Suppose $Y_{i}^{\ast L}$ is given by (\ref{eq:generic}) for some fixed $0\leq i \leq \partial$. }Let $(Y^{k})_{1\leq k \leq L}$ be any set
 of \cb{integrable }approximations to $(Y^{\ast k})_{1\leq k \leq L}$ defined in \eqref{df}. We then have,
\cb{with the conventions $j_0:=-1$, $\rho^{j_0}:=i$, and $Y^{0}=0$,}
\begin{eqnarray}\label{ohneplus}
&&Y_{i}^{\ast L}-Y_{i}^{L}\nonumber\\&\leq& \IE_{i}\max_{\substack{i\leq j_{1}\leq\cdot
\cdot\cdot\leq j_{L}\leq\partial,\\\mathcal{C}_{L}(j_{1},...,j_{L})=1}}
\quad \sum_{k=1}^{L}
\Biggl\{\sum_{r=\rho^{j_{k-1}}}^{j_{k}-1} {\displaystyle\prod\limits_{l=1}^{k-1}}
V_{j_{l}}^{l}\left(  \IE_{r}Y_{r+1}^{L-k+1}%
-Y_{r}^{L-k+1}\right)
\nonumber\\ && +
\1_{\{j_{k}>j_{k-1} \}}~%
{\displaystyle\prod\limits_{l=1}^{k-1}}
V_{j_{l}}^{l}
  \Bigl(  \max_{1\leq n\leq v_{j_{k}} \wedge (L-k+1)}\Bigl\{  \sum_{p=k}^{k+n-1}U_{j_{k}}^{p}%
{\displaystyle\prod\limits_{l=k}^{p-1}}
V_{j_{k}}^{l}\nonumber\\ &&\quad\quad+%
{\displaystyle\prod\limits_{l=k}^{k+n-1}}
V_{j_{k}}^{l}\IE_{j_{k}}Y_{\rho^{j_{k}}}^{L-k-n+1}\Bigr\}  -Y_{j_{k}%
}^{L-k+1}\Bigr)  \Biggr\}.\nonumber
\end{eqnarray}
Moreover, the righthand side becomes zero if $Y^{k} = Y^{\ast k},$ for $k=1,...,L.$
\end{theorem}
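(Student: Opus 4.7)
The plan is to apply Corollary~\ref{cor:tmp_001} to the cashflow (\ref{eq:tmp018}) with $N \to \infty$ (so that the outer maximum gets restricted to chains satisfying $\mathcal{C}_L = 1$), using as approximations of the parameterized Snell envelopes the candidates suggested by Proposition~\ref{Prop10}: namely, I take $Y^{L-k+1, j_1, \ldots, j_{k-1}}_r := \sum_{p=1}^{k-1} U^p_{j_p} \prod_{l=1}^{p-1} V^l_{j_l} + \prod_{l=1}^{k-1} V^l_{j_l}\cdot \IE_r Y^{L-k+1}_{\rho^{j_{k-1}} \vee r}$ for $r > j_{k-1}$ (the $(i)$-type formula) and the analogous $(ii)$-type max formula at $r = j_{k-1}$, both obtained by substituting the given $Y^k$ in place of the exact $Y^{*k}$. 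Proposition~\ref{Prop10} ensures that these approximations coincide with the exact parameterized Snell envelopes whenever $Y^k = Y^{*k}$, so that the equality part of Corollary~\ref{cor:tmp_001} applies directly in that case.

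The technical heart is to simplify the chainwise summand from Corollary~\ref{cor:tmp_001} for each admissible chain $i \leq j_1 \leq \cdots \leq j_L \leq \partial$. On the waiting range $l \in (j_{k-1}, \rho^{j_{k-1}})$, the process $r \mapsto \IE_r Y^{L-k+1}_{\rho^{j_{k-1}} \vee r}$ is a martingale (since $\rho^{j_{k-1}} \vee r = \rho^{j_{k-1}}$ for $r < \rho^{j_{k-1}}$), hence the fluctuation increments vanish; for $l \in [\rho^{j_{k-1}}, j_k - 1]$ they reduce to $\prod V \cdot (\IE_l Y^{L-k+1}_{l+1} - Y^{L-k+1}_l)$, matching the first summand in the theorem. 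For the boundary term at $j_k$ with $j_k > j_{k-1}$, substituting the $(ii)$-type formula for $Y^{L-k, j_1, \ldots, j_k}_{j_k}$ and the $(i)$-type for $Y^{L-k+1, j_1, \ldots, j_{k-1}}_{j_k}$ produces, after re-indexing $n = n' + 1$ inside the max, exactly the boundary term $\prod V \cdot[\max_{1\leq n \leq v_{j_k}\wedge(L-k+1)}\{\cdots\} - Y^{L-k+1}_{j_k}]$ claimed in the theorem.

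The main obstacle is to handle two residual contributions that do not appear explicitly in the theorem's formula: the ``exercise decision'' jump at $l = j_{k-1}$ (present only when $j_k > j_{k-1}$), given by $\prod V \cdot [\IE_{j_{k-1}} Y^{L-k+1}_{\rho^{j_{k-1}}} - \max_n \{\cdots\}]$, and the boundary term in the case $j_k = j_{k-1}$. Both are pathwise non-positive: the first because the max over $n$ admits $n=0$, whose value is exactly $\IE_{j_{k-1}} Y^{L-k+1}_{\rho^{j_{k-1}}}$; the second because after writing out both $(ii)$-type formulas at the $(k-1)$- and $k$-level, the difference reduces to $\prod V \cdot \min\{U^k_{j_{k-1}} + V^k_{j_{k-1}} A - \IE_{j_{k-1}} Y^{L-k+1}_{\rho^{j_{k-1}}}, 0\}$ (where $A$ and the larger quantity $\max\{B_0, U^k_{j_{k-1}} + V^k_{j_{k-1}} A\}$ correspond to the inner maxima at the respective levels). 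Hence the chainwise summand from Corollary~\ref{cor:tmp_001} is pointwise dominated by the theorem's chainwise summand; passing to $\IE_i \max_{\text{chains}}$ then yields the desired upper bound.

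Finally, for the equality claim when $Y^k = Y^{*k}$: one notes that the RHS from Corollary~\ref{cor:tmp_001} already evaluates to $Y^{*L}_i - Y^{*L}_i = 0$ by the equality part of that corollary. On the other hand, the super-martingale property of each exact Snell envelope $Y^{*L-k+1}$ and the dynamic programming principle (Proposition~\ref{prop:dynamic_prog}) force every chainwise summand on the theorem's RHS to be pathwise $\leq 0$, so the theorem's RHS is $\leq 0$. Combined with the domination from the previous paragraph (which yields theorem's RHS $\geq$ Corollary's RHS $= 0$), this squeezes the theorem's RHS to zero, as claimed.
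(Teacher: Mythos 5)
Your proposal is correct and follows essentially the same route as the paper's own proof: you plug the Proposition~\ref{Prop10}-induced approximations of the parameterized Snell envelopes into Corollary~\ref{cor:tmp_001}, verify the same four case distinctions (the increments on $[\rho^{j_{k-1}},j_k-1]$ matching the first summand, the vanishing increments on the waiting range, the non-positive residuals at $r=j_{k-1}$ via the $n=0$ candidate and at $j_k=j_{k-1}$ via the index-set inclusion, and the reindexed boundary term for $j_k>j_{k-1}$), and obtain tightness by the same squeeze between the supermartingale/dynamic-programming bound and the equality part of the corollary. The only detail left implicit is the $k=1$ case (where $Y^{L,\emptyset}=Y^{L}$ and no refraction constraint applies), which the paper treats explicitly but which poses no difficulty.
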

\begin{proof}
Suppose $0\leq i\leq \partial$ is fixed and assume that integrable and adapted processes $Y^{L-k+1}$, $k=1,\ldots, L$, are given which we consider as approximations of the
Snell envelopes of the auxiliary multiple stopping problems $Y^{*L-k+1}$.  Following the
 relationships for the Snell envelopes $Y^{*L-k+1}$ and $Y^{*L-k+1, j_1,\ldots, j_{k-1}}$ in Proposition~\ref{Prop10}, we define
 for $k>1$ approximations to $Y^{*L-k+1, j_1,\ldots, j_{k-1}}$ via
\begin{equation}
Y_{r}^{L-k+1,j_{1},\ldots,j_{k-1}}:=\sum_{p=1}^{k-1}U_{j_{p}}^{p}%
{\displaystyle\prod\limits_{l=1}^{p-1}}
V_{j_{l}}^{l}+\IE_{r}Y_{\rho^{j_{k-1}}\vee r}^{L-k+1}%
{\displaystyle\prod\limits_{l=1}^{k-1}}
V_{j_{l}}^{l},\quad  r>j_{k-1}, \label{xp11}%
\end{equation}
and (for $r=j_{k-1}$)
\begin{eqnarray}
&&Y_{j_{k-1}}^{L-k+1,j_{1},\ldots,j_{k-1}} \nonumber\\  &  :=&\sum_{p=1}^{k-1}U_{j_{p}}^{p}%
{\displaystyle\prod\limits_{l=1}^{p-1}}
V_{j_{l}}^{l}\nonumber \\ &&+%
{\displaystyle\prod\limits_{l=1}^{k-1}}
V_{j_{l}}^{l}\max_{n\in N(j_1,\ldots, j_{k-1})}\left\{  \sum_{p=k}^{k-1+n}U_{j_{k-1}}^{p}%
{\displaystyle\prod\limits_{l=k}^{p-1}}
V_{j_{k-1}}^{l}+%
{\displaystyle\prod\limits_{l=k}^{k-1+n}}
V_{j_{k-1}}^{l}\IE_{j_{k-1}}Y_{\rho^{j_{k-1}}}^{L-k+1-n}\right\}. \label{xp2}
\end{eqnarray}
Clearly, we define, for $k=1$, $Y^{L,\emptyset}=Y^{L}$.

Applying Corollary~\ref{cor:tmp_001} for $X$ as defined in (\ref{eq:generic}) and the above approximations we obtain,
\begin{eqnarray}
Y_{i}^{\ast L}  &  \leq & Y_{i}^{L}+\IE_{i}\max_{\substack{i\leq j_{1}\leq
\cdot\cdot\cdot\leq j_{L}\leq\partial,\\\mathcal{C}_{L}(j_{1},...,j_{L}%
) = 1}}\sum_{k=1}^{L}\Bigg(  Y_{j_{k}}^{L-k,j_{1}%
,\ldots,j_{k}}-Y_{j_{k}}^{L-k+1,j_{1},\ldots,j_{k-1}} \nonumber\\
&&  +\sum_{l=j_{k-1}}^{j_{k}-1}\left(  \IE_{l}Y_{l+1}^{L-k+1,j_{1}%
,\ldots,j_{k-1}}-Y_{l}^{L-k+1,j_{1},\ldots,j_{k-1}}\right)  \Bigg),
\label{plus1}%
\end{eqnarray}
where we again observe that the pathwise maximum is attained on the set $\mathcal{C}_{L}(j_{1},...,j_{L}%
) = 1$ by letting $N$ (in the definition of $X$) tend to infinity.

In order to prove the upper bound, it is, in view of (\ref{plus1}), sufficient to show that, for $i\leq j_1 \leq \cdots\leq j_L\leq \partial$ with $\mathcal{C}_{L}(j_{1},...,j_{L}%
) = 1$ the following assertions are true: \\
(i) If $k=2,\ldots,L$  and $j_k > j_{k-1}$ or if $k=1$, then
\begin{eqnarray*}
&&  Y_{j_{k}}^{L-k,j_{1},\ldots,j_{k}}-Y_{j_{k}}^{L-k+1,j_{1},\ldots,j_{k-1}}
\\ &=& {\displaystyle\prod\limits_{l=1}^{k-1}}
 V_{j_{l}}^{l}
\left(  \max_{0\leq n\leq (v_{j_{k}}-1)\wedge (L-k)}
\left\{  \sum_{p=k}^{k+n}U_{j_{k}}^{p}%
{\displaystyle\prod\limits_{l=k}^{p-1}}
V_{j_{k}}^{l}+%
{\displaystyle\prod\limits_{l=k}^{k+n}}
V_{j_{k}}^{l}\IE_{j_{k}}Y_{\rho^{j_{k}}}^{L-k-n}\right\}  -Y_{j_{k}%
}^{L-k+1}\right).
\end{eqnarray*}
(ii) If $k=2,\ldots,L$  and $j_k =j_{k-1}$, then
$$
Y_{j_{k}}^{L-k,j_{1},\ldots,j_{k}}-Y_{j_{k}}^{L-k+1,j_{1},\ldots,j_{k-1}} \leq 0.
$$
(iii) If $k=1$  and $i\leq r \leq j_1-1$, or if $k=2,\ldots,L$  and  $\rho^{j_{k-1}}\leq r \leq j_{k}-1$, then
$$
\IE_{r}Y_{r+1}^{L-k+1,j_{1}%
,\ldots,j_{k-1}}-Y_{r}^{L-k+1,j_{1},\ldots,j_{k-1}} = {\displaystyle\prod\limits_{l=1}^{k-1}}
V_{j_{l}}^{l}\left(  \IE_{r}Y_{r+1}^{L-k+1}%
-Y_{r}^{L-k+1}\right).
$$
(iv) For $k=2,\ldots, L$ and $j_{k-1}\leq r < \rho^{j_{k-1}}$
$$
\IE_{r}Y_{r+1}^{L-k+1,j_{1}%
,\ldots,j_{k-1}}-Y_{r}^{L-k+1,j_{1},\ldots,j_{k-1}}\leq 0.
$$

 We first show (i). To this end suppose that $k\geq 2$ and $j_k> j_{k-1}$. Then,
 $\mathcal{E}_k(j_1,\ldots, j_k)=1$, which implies $N(j_1,\ldots,j_k)=
\{n;\; 0\leq n \leq (v_{k}-1) \wedge (L-k)\}$. Hence, by (\ref{xp2}),
\begin{eqnarray*}
 &&Y_{j_{k}}^{L-k,j_{1},\ldots,j_{k}}\\ &=&\sum_{p=1}^{k-1}U_{j_{p}}^{p}%
{\displaystyle\prod\limits_{l=1}^{p-1}}
V_{j_{l}}^{l} +%
{\displaystyle\prod\limits_{l=1}^{k-1}}
V_{j_{l}}^{l}\max_{0\leq n\leq (v_{j_k}-1)\wedge (L-k)}\left\{  \sum_{p=k}^{k+n}U_{j_{k}}^{p}%
{\displaystyle\prod\limits_{l=k}^{p-1}}
V_{j_{k}}^{l}+%
{\displaystyle\prod\limits_{l=k}^{k+n}}
V_{j_{k}}^{l}\IE_{j_{k}}Y_{\rho^{j_{k}}}^{L-k-n}\right\}.
\end{eqnarray*}
Subtracting the defining equation (\ref{xp11}) for $Y^{L-k+1,j_1,\ldots, j_{k-1}}_{j_k}$ from the above expression, we obtain (i), because $j_k \geq \rho^{j_{k-1}}$. For $k=1$, we again get $\mathcal{E}_1(j_1)=1$ and (i) follows in the same way, taking the definition $Y^{L,\emptyset}_{j_1}=Y^{L}_{j_1}$ into account.

In order to derive (ii), we note that $\mathcal{E}_k(j_1,\ldots, j_k)=
\mathcal{E}_{k-1}(j_1,\ldots, j_{k-1})+1$ for $j_k=j_{k-1}$. Thus,
\begin{eqnarray*}
 &&Y_{j_{k}}^{L-k,j_{1},\ldots,j_{k}}\\ &=&\sum_{p=1}^{k-1}U_{j_{p}}^{p}%
{\displaystyle\prod\limits_{l=1}^{p-1}}
V_{j_{l}}^{l} \\ &&+%
{\displaystyle\prod\limits_{l=1}^{k-1}}
V_{j_{l}}^{l}\max_{0\leq n\leq (v_{j_k}-\mathcal{E}_{k-1}(j_1,\ldots, j_{k-1})-1)\wedge (L-k)}\left\{  \sum_{p=k}^{k+n}U_{j_{k}}^{p}%
{\displaystyle\prod\limits_{l=k}^{p-1}}
V_{j_{k}}^{l}+%
{\displaystyle\prod\limits_{l=k}^{k+n}}
V_{j_{k}}^{l}\IE_{j_{k}}Y_{\rho^{j_{k}}}^{L-k-n}\right\}
\\ &\leq & \sum_{p=1}^{k-1}U_{j_{p}}^{p}%
{\displaystyle\prod\limits_{l=1}^{p-1}}
V_{j_{l}}^{l} \\ &&+%
{\displaystyle\prod\limits_{l=1}^{k-1}}
V_{j_{l}}^{l}\max_{0\leq n\leq (v_{j_{k-1}}-\mathcal{E}_{k-1}(j_1,\ldots, j_{k-1}))\wedge (L-k+1)}\left\{  \sum_{p=k}^{k+n}U_{j_{k-1}}^{p}%
{\displaystyle\prod\limits_{l=k}^{p-1}}
V_{j_{k-1}}^{l}+%
{\displaystyle\prod\limits_{l=k}^{k+n}}
V_{j_{k-1}}^{l}\IE_{j_{k-1}}Y_{\rho^{j_{k-1}}}^{L-k-n}\right\}
\\ &=& Y_{j_{k-1}}^{L-k+1,j_{1},\ldots,j_{k-1}} = Y_{j_{k}}^{L-k+1,j_{1},\ldots,j_{k-1}}.
\end{eqnarray*}

We next prove (iii). The case $k=1$ is trivial in view of the definition of
$Y^{L,\emptyset}$. Hence, we assume that $k\geq 2$ and $\rho^{j_{k-1}}\leq r \leq j_{k}-1$. Then, $r+1>r\geq\rho^{j_{k-1}}>j_{k-1}$ and, thus, by (\ref{xp11})
\begin{eqnarray*}
 \IE_r Y_{r+1}^{L-k+1,j_{1},\ldots,j_{k-1}}&=&\sum_{p=1}^{k-1}U_{j_{p}}^{p}%
{\displaystyle\prod\limits_{l=1}^{p-1}}
V_{j_{l}}^{l}+\IE_{r}Y_{ r+1}^{L-k+1}%
{\displaystyle\prod\limits_{l=1}^{k-1}}
V_{j_{l}}^{l}, \\ Y_{r}^{L-k+1,j_{1},\ldots,j_{k-1}}&=&\sum_{p=1}^{k-1}U_{j_{p}}^{p}%
{\displaystyle\prod\limits_{l=1}^{p-1}}
V_{j_{l}}^{l}+Y_{ r}^{L-k+1}%
{\displaystyle\prod\limits_{l=1}^{k-1}}
V_{j_{l}}^{l}. \
\end{eqnarray*}
Taking the difference of both equations yields (iii).

It remains to show (iv). For $k\geq 2$ and $j_{k-1}<r <\rho^{j_{k-1}}$,
(\ref{xp11}) implies
\begin{eqnarray*}
 \IE_r Y_{r+1}^{L-k+1,j_{1},\ldots,j_{k-1}}&=&\sum_{p=1}^{k-1}U_{j_{p}}^{p}%
{\displaystyle\prod\limits_{l=1}^{p-1}}
V_{j_{l}}^{l}+\IE_{r}Y_{\rho^{j_{k-1}}}^{L-k+1}%
{\displaystyle\prod\limits_{l=1}^{k-1}}
V_{j_{l}}^{l}=Y_{r}^{L-k+1,j_{1},\ldots,j_{k-1}}.
\end{eqnarray*}
Finally, for $k\geq 2$ and $r=j_{k-1}$, by (\ref{xp11}) and (\ref{xp2}),
\begin{eqnarray*}
 && \IE_{j_{k-1}} Y_{j_{k-1}+1}^{L-k+1,j_{1},\ldots,j_{k-1}}=
\sum_{p=1}^{k-1}U_{j_{p}}^{p}%
{\displaystyle\prod\limits_{l=1}^{p-1}}
V_{j_{l}}^{l}+\IE_{j_{k-1}}Y_{\rho^{j_{k-1}}}^{L-k+1}%
{\displaystyle\prod\limits_{l=1}^{k-1}}
V_{j_{l}}^{l} \\ &\leq & \sum_{p=1}^{k-1}U_{j_{p}}^{p}%
{\displaystyle\prod\limits_{l=1}^{p-1}}
V_{j_{l}}^{l}\nonumber \\ &&+%
{\displaystyle\prod\limits_{l=1}^{k-1}}
V_{j_{l}}^{l}\max_{n\in N(j_1,\ldots, j_{k-1})}\left\{  \sum_{p=k}^{k-1+n}U_{j_{k-1}}^{p}%
{\displaystyle\prod\limits_{l=k}^{p-1}}
V_{j_{k-1}}^{l}+%
{\displaystyle\prod\limits_{l=k}^{k-1+n}}
V_{j_{k-1}}^{l}\IE_{j_{k-1}}Y_{\rho^{j_{k-1}}}^{L-k+1-n}\right\}
\\ &=& Y_{j_{k-1}}^{L-k+1,j_{1},\ldots,j_{k-1}}.
\end{eqnarray*}
Hence, the asserted upper bound for $Y^{\ast L}_i -Y^{L}_i$ is shown. This upper bound is zero, if  $Y^{k} = Y^{\ast k},$ for $k=1,...,L$, because, by
Proposition \ref{prop:dynamic_prog}, $Y^{\ast L-k+1}$ is a supermartingale and $Y^{\ast L-k+1}_{j_k}$ dominates
$$
 \max_{1\leq n\leq v_{j_{k}} \wedge (L-k+1)}\Bigl\{  \sum_{p=k}^{k+n-1}U_{j_{k}}^{p}%
{\displaystyle\prod\limits_{l=k}^{p-1}}
V_{j_{k}}^{l}+%
{\displaystyle\prod\limits_{l=k}^{k+n-1}}
V_{j_{k}}^{l}\IE_{j_{k}}Y_{\rho^{j_{k}}}^{\ast L-k-n+1}\Bigr\}.
$$
\end{proof}

As a spin-off result from Theorem \ref{thm:schwer}, we may write the following upper
bound for $Y_{i}^{\ast,L}$ which avoids the computation of the recursive maximum
from Proposition \ref{prop:recursive_max} (cf. \cite{Schoen2010}[Remark 3.3] \cbx{for a related result in the context
of the standard multiple stopping problem}).
\begin{corollary}
 Suppose all assumptions and all conventions of Theorem \ref{thm:schwer} are in force. Then,
\begin{eqnarray*}
&&Y_{i}^{\ast L}-Y_{i}^{L}\nonumber\\&\leq&  \IE_{i}\Biggl\{  \sum_{r=i}^{T-1}\max_{0\leq k<L}\left(  \mathcal{V}_{\max}%
^{k}\left(  \IE_{r}Y_{r+1}^{L-k}-Y_{r}^{L-k}\right)  ^{+}\right)
+ \sum_{k=1}^{L}\mathcal{V}_{\max}^{k-1}\\ &&\times
  \max_{i\leq j\leq\partial}\left(  \max_{1\leq n\leq v_{j_{k}%
}\wedge(L-k+1)}\left(  \sum_{p=k}^{k+n-1}U_{j}^{p}%
{\displaystyle\prod\limits_{l=k}^{p-1}}
V_{j}^{l}
{\displaystyle\prod\limits_{l=k}^{k+n-1}}
V_{j}^{l}\IE_{j}Y_{\rho^{j}}^{L-k-n+1}\right)  -Y_{j}^{L-k+1}\right)
^{+}\Biggr\},
\end{eqnarray*}
where
$$
\mathcal{V}_{\max}^{k}:=
{\displaystyle\prod\limits_{l=1}^{k}}
\max_{j\geq i}V_{j}^{l}.
$$
Moreover, the righthand side becomes zero if $Y^{k} = Y^{\ast k},$ for $k=1,...,L.$
\end{corollary}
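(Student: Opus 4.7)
The plan is to derive the claimed inequality pathwise from Theorem~\ref{thm:schwer} by dismantling its single outer maximum into separate maxima over $r$ (for the telescoping terms) and over $j$ (for the one-step look-ahead terms), after dropping negative contributions via positive parts and bounding every product of $V$'s by its uniform upper bound $\mathcal{V}_{\max}^{k-1}$.

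Write the bound in Theorem~\ref{thm:schwer} schematically as $\IE_i\max_{\mathrm{chain}}\sum_{k=1}^L\bigl(T_1(k)+T_2(k)\bigr)$, where $T_1(k)=\sum_{r=\rho^{j_{k-1}}}^{j_k-1}\prod_{l=1}^{k-1}V_{j_l}^l(\IE_rY^{L-k+1}_{r+1}-Y^{L-k+1}_r)$ is the telescoping part and $T_2(k)$ is the remaining ``jump'' term carrying the indicator $\1_{\{j_k>j_{k-1}\}}$. Using $\max(f+g)\leq\max f+\max g$, the two contributions can be estimated separately. For the first, the crucial observation is that for any admissible chain (with convention $\rho^{j_0}:=i$) the intervals $\{\rho^{j_{k-1}},\ldots,j_k-1\}$, $k=1,\ldots,L$, are pairwise disjoint subsets of $\{i,\ldots,T-1\}$: intervals with $j_k=j_{k-1}$ are empty, and for $j_k>j_{k-1}$ the refraction constraint $j_k\geq\rho^{j_{k-1}}$ keeps consecutive non-empty intervals from overlapping. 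Swapping the order of summation then gives at each $r$ the contribution of at most one $k$, which I bound by $\mathcal{V}_{\max}^{k-1}(\IE_rY^{L-k+1}_{r+1}-Y^{L-k+1}_r)^+$ using $0<\prod_{l=1}^{k-1}V_{j_l}^l\leq\mathcal{V}_{\max}^{k-1}$: this yields $\prod V\cdot X\leq\mathcal{V}_{\max}^{k-1}X^+$ regardless of the sign of $X$, since for $X<0$ the left-hand side is already nonpositive. Maximising over $k$ and reindexing $k\mapsto k+1$ produces the first sum in the statement.

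The second contribution is handled by applying the same positive-part estimate termwise in $k$; after this each summand depends on the chain only through $j_k$, which can be maximised independently over $\{i,\ldots,\partial\}$, yielding the second sum. Since the resulting bound no longer involves the chain, the outer chain-maximum disappears and $\IE_i$ passes through on both sides.

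The ``moreover'' assertion follows because, when $Y^k=Y^{\ast k}$ for all $k$, the dynamic program in Proposition~\ref{prop:dynamic_prog} implies that $Y^{\ast L-k}$ is a supermartingale (so $\IE_rY^{\ast L-k}_{r+1}-Y^{\ast L-k}_r\leq 0$) and that the ``exercise now'' expression inside the second positive part is dominated by $Y^{\ast L-k+1}_j$; both positive parts therefore vanish identically. The main delicate point is the sign management that makes the replacement $\prod V\leadsto\mathcal{V}_{\max}^{k-1}$ legitimate regardless of the sign of the factor it multiplies; once this is handled via the positive part, the rest is combinatorial bookkeeping based on the disjointness of the telescoping intervals.
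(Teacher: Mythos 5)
Your proof is correct and is precisely the ``straightforward check'' that the paper's one-line proof alludes to: split the pathwise maximum, use disjointness of the telescoping intervals (guaranteed by the refraction constraint $j_k\ge\rho^{j_{k-1}}$), replace the positive $V$-products by $\mathcal{V}_{\max}^{k-1}$ after passing to positive parts, decouple the maxima over $r$, $k$ and $j$, and settle tightness exactly as at the end of the proof of Theorem \ref{thm:schwer} via the supermartingale and domination properties from Proposition \ref{prop:dynamic_prog}. One small caveat: when $j_k=\partial$ the telescoping interval $\{\rho^{j_{k-1}},\ldots,j_k-1\}$ reaches $r=T$, so these intervals are subsets of $\{i,\ldots,T\}$ rather than of $\{i,\ldots,T-1\}$ as you assert --- this endpoint is already suppressed in the corollary's own statement (whose first sum stops at $T-1$), so the imprecision is inherited from the paper rather than introduced by you, but strictly the first sum should run to $r=T$ unless one additionally knows $\IE_{T}Y^{l}_{\partial}\le Y^{l}_{T}$ for the approximations used.
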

\begin{proof}
 It is straightforward to check that the upper bound in this corollary is actually an upper bound to the righthand side of the estimate in Theorem \ref{thm:schwer}. That the bound is still tight, i.e. that the righthand side
becomes zero, if $Y^{k} = Y^{\ast k},$ for $k=1,...,L,$ follows from the same argument as at the end of the proof of Theorem \ref{thm:schwer}.
\end{proof}

\section{A numerical example}\label{sec:num_ex}

We provide a numerical example for the dual representation of multiple stopping problems \cbx{in the context of swing option pricing. }
Throughout this section,
we assume $i=0$, \cbx{i.e. we provide confidence bounds for the swing option price at time 0. Precisely, }we consider a stylized swing option,
similar to those considered in \cite{MH2004} and \cite{Bender2010}.
In our setting, the holder of a swing option has the right to buy a certain quantity of electricity in the
period from $j=0,\dots,T$, for a fixed strike price $K>0$, subject to the restriction that the option
allows up to $L\geq 1$ exercise opportunities under the volume constraints $v_j$, and where a refraction period has to be taken into account. Here we choose
$T=50$ and \cbx{recall that $\partial := T+1$. }The price of electricity, $(S_t)_{t=0,\ldots,T}$, is modeled by the following discretized exponential Gaussian Ornstein-Uhlenbeck process
\begin{align}
\label{eq:spot_price}
\log(S_j) = (1-k) \big( \log(S_{j-1}) - \mu \big) + \mu + \sigma \epsilon_j, ~ S_0 = s_0 > 0,
\end{align}
where $(\epsilon_j)_{j=1,...,T}$ is a family of \cbx{independent }standard normal random variables and the parameters are specified by
\begin{align*}
\sigma = 0.5, ~ k = 0.9, ~ \mu = 0, ~ s_0 = 1.
\end{align*}
\cbx{We set $S_\partial=0$, which means that no penalty is imposed, if the holder of the option does not exercise all rights.}
 The \cbx{payoff of the swing option is then given by $X$ in (\ref{eq:generic}) with }
\begin{align*}
&V^{l}_j := 1, ~ l=1,\ldots,L-1, ~ j=0,\ldots,\cbx{\partial},\\
&U^{p}_j := Z_j := Z(S_j):= (S_j - K)^+, ~ j=0,\ldots,\cbx{\partial}, ~ p=1,\ldots,L.
\end{align*}
\cbx{In our numerical study we assume that the strike price is $K=1$. As volume constraints we consider the situation of a unit volume constraint
$v_i=1$ for $i=0,\ldots, T$ and the situation of an off-peak swing option with $v_i=1$ on weekdays and $v_i=2$ on Saturdays and Sundays.
The refraction period which we impose is a constant refraction period, i.e. $\rho^{i}=(i+\delta)\wedge \partial$ for various choices of the
constant $\delta\in\mathbb{N}$.}

 In this \cbx{Markovian }framework, we produce \cbx{confidence intervals }for the price of the
swing option at time $i=0$ by applying the following steps. \cbx{The procedure below can easily be generalized to the generic
cashflow structure of Section 3, provided the problem has a Markovian structure. (For notational convenience we only spell out the algorithm for the swing option case.)}

\bigskip

\subsection{Implementation}

\textit{Step 1: Precompute \cbx{an approximation of the continuation values}.} ~ We employ least squares Monte Carlo regression to obtain an approximation to the continuation \cbx{values}
\begin{align*}
C^{\ast 1,l}_j (S_j) &:= \E \Big[ Y^{\ast l}_{j+1} \big| \cF_j \Big]\cbx{=\E \Big[ Y^{\ast l}_{j+1} \big| S_{j} \Big]}, ~ C^{\ast 1,l}_T (S_T) = 0,\\
C^{\ast \delta,l}_j (S_j) &:= \E \Big[ Y^{\ast l}_{j+\delta} \big| \cF_j \Big]\cbx{=\E \Big[ Y^{\ast l}_{j+\delta} \big| S_j \Big]}, ~ C^{\ast \delta,l}_T (S_T) = 0,
\end{align*}
with $l=1,\ldots,L$, where here and in the following $j+1$ and $j+\delta$ are to be understood as  $j+1\, \wedge\,\partial$ and $j+\delta\,\wedge\,\partial.$ 
Recall that $(Y^{\ast l}_{j})_{j=0,\ldots,T}$ is \cbx{given by }the dynamic program
from Proposition \ref{prop:dynamic_prog}. We simulate \cbx{$N_1$ independent paths }$(S^m_j)^{m=1,\ldots,N_1}_{j=0,\ldots,T}$. Choosing as basis functions
\begin{align*}
\psi_1(x) := x, \quad \psi_2(x) := (x-K)^+,
\end{align*}
we use $(S^m_j)^{m=1,\ldots,N_1}_{j=0,\ldots,T}$ in a \cbx{straightforward }least squares regression procedure
to \cbx{solve the dynamic program approximately, replacing the conditional expectations by the least squares Monte Carlo estimator. This yields
}approximations to $C^{\ast 1,l}_j(\cdot)$ and $C^{\ast \delta,l}_j(\cdot)$, denoted by $C^{1,l}_j(\cdot)$ and $C^{\delta,l}_j(\cdot)$.

\bigskip

\textit{Step 2: Compute \cbx{lower bounds.}} ~\cbx{Given the functions $C^{1,l}_j(\cdot)$ and $C^{\delta,l}_j(\cdot)$}, we define
a (suboptimal) stopping rule $\big(\tau^{p,l}_j \big)^{1\leq l \leq L}_{1\leq p \leq l}$ for $0 \leq j \leq T$
\cbx{along a given trajectory $(S_j)_{j=0\ldots,T}$ (which we suppress in the notation below) }using the following iteration. \cbx{Here $\tau^{p,l}_j$ is interpreted as the time at which the investor exercises the $p$th right, if $l$
rights are left at time $j$.}
\begin{align}
\label{eq:tau}
&\tau^{0,l}_j := j-\delta; \nonumber\\
&\cbx{p} := k := 0; \nonumber\\
&while ~ (p < l) \quad do \nonumber\\
&\qquad \tau^{p+1,l}_j := \inf \Big\{( \tau^{p,l}_j + \delta)\wedge \partial \leq r \leq \cbx{\partial}:
\max_{1\leq n\leq v_r \wedge (l-p)} \Big( n Z_r + C^{\delta,l-p-n}_r \Big) \geq C^{1,l-p}_r \Big\},\nonumber\\
&\qquad s:= \tau^{p+1,l}_j, \nonumber\\
&\qquad k := \argmax_{1\leq n\leq v_s \wedge (l-p)} \Big( n Z_s + C^{\delta,l-p-n}_s \Big), \nonumber\\
&\qquad \tau^{p+1,l}_j := \tau^{p+2,l}_j := \ldots := \tau^{p+k,l}_j :=s, \nonumber\\
&\qquad p:= p+k, \\
&end \nonumber
\end{align}

\cbx{When $C^{1,l}_j(\cdot)$ and $C^{\delta,l}_j(\cdot)$ are replaced by $C^{\ast 1,l}_j(\cdot)$ and $C^{\ast \delta,l}_j(\cdot)$, then this
family of stopping times is optimal. Hence, $\big(\tau^{p,l}_j \big)^{1\leq l \leq L}_{1\leq p \leq l}$ is a good family of stopping times,
if the approximations of the continuation values in Step 1 are reasonably close to the true continuation values.}
\medskip
\cbx{
\begin{remark}
(i) In the situation of unit volume constraint (i.e. $v\equiv 1$), the stopping rule \eqref{eq:tau} simplifies to
$\tau^{0,l}_j=j-\delta$ and
\begin{align*}
\tau^{p,l}_j = \inf \big\{(\tau^{p-1,l}_j + \delta)\wedge \partial \leq r \leq \partial: Z_r + C^{\delta,l-p}_r \geq C^{1,l-p+1}_r \big\}, \quad 1\leq l \leq L, \quad 1 \leq p \leq l,
\end{align*}
compare with Eq. (3.7) in \cite{Bender2010}.
\\[0.1cm]
(ii) In the situation of a trivial refraction period (i.e. $\delta=1$), the above construction of approximate stopping rules is also
used in \cite{AH2010}.
\end{remark}
}
\medskip

Setting
\begin{align*}
\underline{Y}^l_0 := \IE_0 \sum_{p=1}^l Z_{\tau^{p,l}_0}, ~ \underline{Y}^l_1 := \IE_1 \sum_{p=1}^l Z_{\tau^{p,l}_1}, ~ \underline{Y}^l_\delta := \IE_\delta \sum_{p=1}^l Z_{\tau^{p,l}_\delta},
\end{align*}
we have that $\underline{Y}^l_0$ is a lower bound for $Y^{\ast l}_0$. By the tower property of the conditional expectation, we also have
\begin{align*}
\IE_0\underline{Y}^l_1 = \IE_0 \sum_{p=1}^l Z_{\tau^{p,l}_1}, \quad \IE_0\underline{Y}^l_\delta = \IE_0 \sum_{p=1}^l Z_{\tau^{p,l}_\delta}.
\end{align*}
As for simulations, we generate a new set of $N_2$ independent paths of the underlying price process, \cbx{which we again denote,
in abuse of notation, by }$(S^m_j)^{m=1,\ldots,N_2}_{j=0,\ldots,T}$. \cbx{Along theses $N_2$ trajectories we compute $\tau^{p,l}_0$ and apply
the notation }
\begin{align*}
\tau^{p,l,m}_0, \quad 1 \leq p \leq l, ~ 1\leq m \leq N_2.
\end{align*}
Now the lower biased estimate \cbx{$\widehat{\underline{Y}}^l_0$ }for $Y^{\ast l}_0$ is calculated by averaging over the
 $N_2$ realizations of $\sum_{p=1}^l Z_{\tau^{p,l}_0}$, \cbx{i.e. }
\begin{align}
\widehat{\underline{Y}}^l_0 = \frac{1}{N_2}\sum_{m=1}^{N_2} \sum_{p=1}^l Z(S^m_{\tau^{p,l,m}_0}), \quad 1 \leq l \leq L.\label{eq:lower_zero}
\end{align}
\cbx{Similarly, we also construct approximations
$$
\hat \IE_0\underline{Y}^l_1 = \frac{1}{N_2}\sum_{m=1}^{N_2}  \sum_{p=1}^l Z(S_{\tau^{p,l,m}_1}), \quad \hat \IE_0\underline{Y}^l_\delta = \frac{1}{N_2}\sum_{m=1}^{N_2}  \sum_{p=1}^l Z(S_{\tau^{p,l,m}_\delta})
$$
of $\IE_0\underline{Y}^l_1 $ and $\IE_0\underline{Y}^l_\delta$, which we store for later use.}

For constructing confidence intervals, we also save the \cbx{empirical }standard deviation $\std(\widehat{\underline{Y}}^l_0)$.

\bigskip

\textit{Step 3: Compute approximation\cbx{s }to the Snell envelope\cbx{s}.} ~ Using the stopping rule \eqref{eq:tau}, we \cbx{consider }a family of random variables
\begin{align}\label{eq:Y_lower}
\underline{Y}^l_j &:= \IE_j \sum_{p=1}^l Z_{\tau^{p,l}_j}, ~ 1 \leq l \leq L, ~ 0 \leq j \leq T,
\end{align}
which is an approximation to the Snell envelope $\Big(Y^{\ast l}_j \Big)^{1\leq l \leq L}_{0 \leq j \leq T}$. We apply the following procedure to simulate $\underline{Y}^l_j$:

\smallskip

We simulate a new set of $N_3$ paths of the underlying $(S^m_j)^{1\leq m \leq N_3}_{0\leq j \leq T}$ (\cbx{abusing the notation, again}). \cbx{We refer
to these paths as the outer paths. We now fix a pair $(m,j)$ and compute approximations of  $\underline{Y}^l_j$, $\E_j\underline{Y}^l_{j+1}$, and
$\E_j\underline{Y}^l_{j+\delta}$ along the $m$th outer path which are denoted by $\widehat{\underline{Y}}^{l,m}_j$, $\widehat{\E}^m_j\underline{Y}^l_{j+1}$, and
$\widehat{\E}^m_j\underline{Y}^l_{j+\delta}$, respectively. In these approximations the conditional expectations are replaced by the sample mean
over a set of inner simulations. Hence, for the fixed path $S^m$ and the fixed time point $j$, we generate
$N_4$ independent sample paths of $(S_r)_{r=j,\ldots,T}$ under the
conditional law given that $S_j=S^m_j$. These inner paths are denoted by $(\bar S^{\nu}_r)_{r=j,\ldots,T}^{\nu=1,\ldots, N_4}$, suppressing
here and in the following the dependence on $(m,j)$.
 Along
the inner paths $\bar S^{\nu}$  we compute the stopping times  $\tau^{p,l}_i$ for $i=j, j+1,j+\delta$ in \eqref{eq:tau} and apply the notation
\begin{align*}
\tau^{p,l,\nu}_i, ~ 1 \leq p \leq l, ~ 1 \leq l \leq L, ~ \nu=1,\ldots, N_4.
\end{align*}
We now define
\begin{align*}
\widehat{\underline{Y}}^{l,m}_j := \widehat{\IE}^m_j \sum_{p=1}^l Z_{\tau^{p,l}_j} := \frac{1}{N_4}\sum_{\nu=1}^{N_4}
\sum_{p=1}^l  Z\big( \bar S^{\nu}_{\tau^{p,l,\nu}_j}\big).
\end{align*}
 Similarly, we approximate $\IE_j \underline{Y}^{l}_{j+1}$ for the fixed $j$ along the fixed $m$th outer path by
\begin{align*}
\widehat{\IE}^m_j \underline{Y}^{l}_{j+1} := \frac{1}{N_4} \sum_{\nu=1}^{N_4} \sum_{p=1}^l Z\big( \bar S^{\nu}_{\tau^{p,l,\nu}_{j+1}} \big),
\end{align*}
taking the tower property of the conditional expectation into account.
The approximation $\widehat{\IE}^m_j \underline{Y}^{l}_{j+\delta}$ is obtained analogously.}

\smallskip

\begin{remark}\label{rem:var_red}
Note that, \cbx{for $j=0$ approximations $\widehat{\underline{Y}}^{l}_0$, $\widehat{\IE}_0 \underline{Y}^{l}_{1}$,
 and  $\widehat{\IE}_0 \underline{Y}^{l}_{\delta}$ of $\underline{Y}^{l}_0$, $\IE_0 \underline{Y}^{l}_{1}$, and  $\IE_0 \underline{Y}^{l}_{\delta}$
were already obtained based on the $N_2$-samples in Step 2. As typically $N_2>N_4$ these approximations are more accurate. Hence, one can perform
Step 3 for $j\geq 1$ only and set }
$$\widehat{\underline{Y}}^{l,m}_0 := \widehat{\underline{Y}}^l_0,\quad
\widehat{\IE}^m_0 \underline{Y}^{l}_{1}:=\widehat{\IE}_0 \underline{Y}^{l}_{1}, \quad
\widehat{\IE}^m_0 \underline{Y}^{l}_{\delta}:=\widehat{\IE}_0 \underline{Y}^{l}_{\delta},  \quad m=1,\ldots,M.$$
This trick of \cbx{applying the more accurate non-nested Monte Carlo simulation of Step $2$ at time 0 }leads to a significant decrease of the variance in the simulation of the upper bound. This is in the same spirit as the computation of low variance upper bounds for the standard stopping problem from \cite{AB2004}).
\end{remark}

\bigskip

\textit{Step 4: Compute the upper bounds.} ~ The Doob decomposition of $\underline{Y}^l_j$ yields the pair $(\underline{M}^l_j,\underline{A}^l_j)$. Note that due to
\begin{align*}
\underline{M}^l_{i+1} - \underline{M}^l_i = \underline{Y}^l_{i+1} - \IE_i \underline{Y}^l_{i+1},
\end{align*}
and
\begin{align*}
-(\underline{M}^l_{i+\delta} - \underline{M}^l_i) + \underline{A}^l_{i+\delta} - \IE_i \underline{A}^l_{i+\delta} = \IE_i \underline{Y}^l_{i+\delta} - \underline{Y}^l_{i+\delta},
\end{align*}
we can rewrite \cbx{the recursion formula in }Proposition \ref{prop:recursive_max} as
\begin{align*}
\theta^{n,L}_i = \max \left\{ \theta^{n,L}_{i+1} + \IE_i \underline{Y}^{L-n}_{i+1} - \underline{Y}^{L-n}_{i+1}, \max_{1 \leq \nu \leq v_i \wedge (L-n)} \Big( \nu Z_i + \theta^{n+\nu,L}_{i+\delta} + \IE_i \underline{Y}^{L-n-\nu}_{i+\delta} - \underline{Y}^{L-n-\nu}_{i+\delta}  \Big)   \right\}.
\end{align*}
\cbx{We now introduce approximations $\theta^{n,L,m}_i$ of  $\theta^{n,L}_i$ along the $m$th outer path of Step 3 by replacing
$\underline{Y}^l_j$, $\E_j\underline{Y}^l_{j+1}$, and
$\E_j\underline{Y}^l_{j+\delta}$ with their simulated counterparts $\widehat{\underline{Y}}^{l,m}_j$, $\widehat{\E}^m_j\underline{Y}^l_{j+1}$, and
$\widehat{\E}^m_j\underline{Y}^l_{j+\delta}$ constructed in Step 3.}

As simulation based estimate for the upper bound, we use $$Y^{up,L}_0 := \frac{1}{N_3} \sum_{m=1}^{N_3} \theta^{0,L,m}_0.$$
\cbx{Replacing the conditional expectations by the sample mean in $\theta^{0,L,m}_0$ introduces an additional bias up thanks to Jensen's
inequality and the convexity of the maximum. Hence, the estimator $Y^{up,L}_0$ is biased up by Theorem \ref{cor:corollary_dual} and
Proposition \ref{prop:recursive_max}.}

 \cbx{Finally, a }$95\%$ confidence interval \cbx{on the price of the swing option }is given by $$\left[ \widehat{\underline{Y}}^L_0 - 1.96\times\std(\widehat{\underline{Y}}^L_0), Y^{up,L}_0 + 1.96\times\std(Y^{up,L}_0) \right].$$

\medskip

\begin{table}[htbp]
\begin{tabular}{l || l l l l | l	l	l l}
\hline
 &	&	& 	&$95\%$ confidence  	&	&	&	&$95\%$ confidence\\
$\delta$  &$L$ &$\widehat{\underline{Y}}^L_0$ &$Y^{up,L}_0$  &interval	&$L$ &$\widehat{\underline{Y}}^L_0$ &$Y^{up,L}_0$  &interval\\
\hline \hline
1		&2	&3.3116	&3.3211	&[3.30738, 3.32229] &3	&4.53627	&4.54806	&[4.53118, 4.54938]	\\
2		&2	&3.27513	&3.28469	&[3.27094, 3.28587]	&3	&4.43753	&4.45154	&[4.43252, 4.45295]	\\
3		&2	&3.2525	&3.26286	&[3.2483, 3.26414]	&3	&4.36706	&4.38245	&[4.36204, 4.38392]\\
4		&2	&3.2313 &3.24083	&[3.22716, 3.242]	&3	&4.29996 &4.31656	&[4.29502, 4.31813]\\
5		&2	&3.20906 &3.22061	&[3.20496, 3.22199]	&3	&4.29996 &4.31656	&[4.29502, 4.31813]\\
6		&2	&3.18613 &3.19809	&[3.18197, 3.19948]	&3	&4.15557 &4.17514	&[4.15063, 4.17697]	\\
8		&2	&3.13625 &3.14984	&[3.13213, 3.15143]	&3	&3.99773 &4.01954	&[3.99289, 4.02158]	\\
10		&2	&3.09022 &3.10332	&[3.08613, 3.1048]	&3	&3.83377 &3.8528	&[3.82898, 3.85464]\\
12		&2	&3.03874 &3.05196	&[3.03468, 3.05356]	&3	&3.65492 &3.67658	&[3.65023, 3.67868]\\
14		&2	&2.98727 &3.00048	&[2.98321, 3.00199]	&3	&3.47017 &3.49061	&[3.46558, 3.49258]\\
16		&2	&2.92751 &2.94214	&[2.9235, 2.9438]	&3	&3.27524 &3.29482	&[3.27077, 3.29674]\\
18		&2	&2.87368 &2.8888	&[2.86964, 2.89049]	&3	&3.09209 &3.11002	&[3.08775, 3.11186]\\
20		&2	&2.81521 &2.83005	&[2.81123, 2.83173]	&3	&2.91951 &2.93649	&[2.91536, 2.9383]\\
\hline
\end{tabular}
\caption{\textit{Unit volume constraints ($v_j \equiv 1$).} Numerical results based on the approximation
$\widehat{\underline{Y}}^{l}_j$ to the Snell envelope via the stopping rule \eqref{eq:tau} for two and three exercise rights.}
\label{table:prag_upper}
\end{table}

\medskip


\medskip

\subsection{Numerical results: swing options with unit volume constraints}

\smallskip

\cbx{We now present some numerical results which the above algorithm produces  for the swing option contract
as specified at the beginning
of this section.
Let us first consider the situation of a unit volume constraint, i.e. $v_j := 1$ for $j=0,\ldots,T$.
We recall that $\delta \in \mathbb{N}$ denotes a constant refraction period. In this setting the dual representation of
 Theorem \ref{cor:corollary_dual} reduces to the one derived in \cite{Bender2010}. In the latter paper the same swing option example is treated
numerically but for up to three exercise rights only. Thanks to the new recursion formula in Proposition \ref{prop:recursive_max} we can
now efficiently treat the case of a large number of exercise rights (here up to $L=10$). Moreover, the upper bound algorithm in \cite{Bender2010}
differs slightly from the one we propose here. In \cite{Bender2010} the upper bound is calculated based on the numerical Doob decomposition
of
$$
{Y}^l_j=\max\{Z_j+ C^{\delta,l-1}_j, C^{1,l}_j\}
$$
while we here utilize the numerical Doob decomposition of  $\underline{Y}^l_j := \IE_j \sum_{p=1}^l Z_{\tau^{p,l}_j}.$}

 The choice of simulation parameters in our study is as follows:
in Step 1, we choose $N_1=1000$ paths for the \cbx{least squares Monte Carlo regression to approximate the }continuation function.
In Step 2, the lower bound is simulated using $N_2=300000$ paths and in Step 3, we employ $N_3=2000$ outer and $N_4=100$ inner paths
\cbx{for the computation of the upper bound. }Moreover, we use the variance reduction method from Remark \ref{rem:var_red}.

Table \ref{table:prag_upper} depicts the numerical results for \cbx{the case of two and three exercise rights for a refraction period
ranging from 1 to 20. We observe that the relative length of the 95\%-confidence intervals is less than 1\%
in all cases.  A comparison with the numerical results in \cite{Bender2010} shows that the
differences in the upper price estimator based on $Y^l$ and $\underline Y^l$ are negligible, but the variance reduction method of Remark
\ref{rem:var_red} shrinks the confidence interval significantly.

The numerical results for the case of a larger number of exercise rights ($L=4,6,8,10$) are presented in Table \ref{table:prag_upper_2}.
Due to the time horizon of 50 days, it may happen that, for a large number of rights and a large refraction period, some exercise rights
cannot be used by the investor. This explains why e.g. the price bounds for the swing option with refraction period $\delta=14$ are
the same for $L=4$ and $L=6$ rights. Concerning the accuracy of our numerical procedure we emphasize that the relative
difference between lower and upper bound is still less than 1\% even in the case of 10 exercise rights.}

\medskip

\begin{table}[htbp]
\begin{tabular}{l || l l l l | l	l	l l}
\hline
 &	&	& 	&$95\%$ confidence  	&	&	&	&$95\%$ confidence\\
$\delta$  &$L$ &$\widehat{\underline{Y}}^L_0$ &$Y^{up,L}_0$  &interval	&$L$ &$\widehat{\underline{Y}}^L_0$ &$Y^{up,L}_0$  &interval\\
\hline \hline
1		&4	&5.60136	&5.614	&[5.59554, 5.61527]	&6	&7.38677	&7.40107	&[7.37977, 7.4023]\\
2		&4	&5.41347	&5.43091	&[5.4078, 5.43249]	&6		&6.94554	&6.97364	&[6.93882, 6.97562]\\
3		&4	&5.27248	&5.29342	&[5.2668, 5.29509]	&6		&6.58739	&6.62054	&[6.58071, 6.62272]\\
4		&4	&5.13119 &5.15543	&[5.12562, 5.15741]	&6		&6.2151 &6.25165	&[6.2086, 6.2541]\\
5		&4	&4.98353 &5.01117	&[4.97802, 5.0133]	&6		&5.81445 &5.85591	&[5.80811, 5.85862]\\
6		&4	&4.82479 &4.85239	&[4.81928, 4.85454]	&6		&5.4079 &5.44248	&[5.40174, 5.44491]\\
8		&4	&4.49057 &4.51822	&[4.48525, 4.52041]	&6		&4.68469 &4.71574	&[4.67908, 4.71808]\\
10		&4	&4.13658 &4.16231	&[4.13141, 4.16444]	&6		&4.1662 &4.19164	&[4.16098, 4.19373]\\
12		&4	&3.78981 &3.81429	&[3.78491, 3.81652]	&6		&3.78992 &3.81448	&[3.78502, 3.81671]\\
14		&4	&3.50023 &3.52138	&[3.49558, 3.52334]	&6		&3.50023 &3.52138	&[3.49558, 3.52334]\\
\hline
1		&8	&8.83286	&8.84907	&[8.82488, 8.85034]	&10		&10.0219	&10.0391	&[10.0131, 10.0404]\\
2		&8	&8.04508	&8.08421	&[8.03754, 8.08651]	&10		&8.80264	&8.85093	&[8.79443, 8.85353]\\
3		&8	&7.36943	&7.41474	&[7.36205, 7.41734]	&10		&7.73096	&7.78117	&[7.72312, 7.78394]\\
4		&8	&6.66726 &6.71129	&[6.66021, 6.71389]	&10		&6.74649 &6.78596	&[6.73929, 6.78847]\\
5		&8	&5.99864 &6.0388	&[5.99202, 6.0414]	&10		&6.0035 &6.04305	&[5.99687, 6.04558]\\
6		&8	&5.45188 &5.48518 &[5.44563, 5.48748]	&10		&5.45187	&5.48518	&[5.44563, 5.48748]\\
\hline
\end{tabular}
\caption{\textit{Unit volume constraints ($v_j \equiv 1$).} Numerical results based on the approximation $\widehat{\underline{Y}}^{l}_j$ to the Snell envelope via the stopping rule \eqref{eq:tau} for a higher number of exercise rights.}
\label{table:prag_upper_2}
\end{table}

\medskip


\medskip

\subsection{Numerical results: off-peak swing option}



We now consider a swing option which allows for buying at most one package of electricity on weekdays and \cbx{two }packages on
Saturdays and Sundays (off-peak period). Hence, we have for $j=0,\ldots,50$ the volume constraints
\begin{align}\label{eq:off_peak}
v_j := \begin{cases}
				1,	&\text{if $j$ is a week day,}\\
				2,	&\text{if $j$ is a weekend day,}
			 \end{cases}
\end{align}
where we start in $j=0$ on a Monday.

\begin{table}[htbp]
\begin{tabular}{l | l l l l | l l l l}
\hline
 &	&	& 	&$95\%$ confidence &	&	&	&$95\%$ confidence\\
$\delta$  &$L$ &$\widehat{\underline{Y}}^L_0$ &$Y^{up,L}_0$  &interval	&$L$	&$\widehat{\underline{Y}}^L_0$	 &$Y^{up,L}_0$	&interval\\
\hline \hline
1		&2	&3.39804	&3.40779	&[3.39342, 3.409] &3	&4.72241	&4.73543	&[4.71667, 4.73682]	\\
2		&2	&3.36368	&3.37544	&[3.35908, 3.37676]	&3	&4.63568	&4.65304	&[4.62998, 4.65468]	\\
3		&2	&3.34728	&3.35873	&[3.34265, 3.36017]	&3	&4.58105	&4.59793	&[4.57532, 4.59971]\\
4		&2	&3.32763 &3.34072	&[3.32302, 3.34223]	&3	&4.52367 &4.54475	&[4.51799, 4.5467]\\
5		&2	&3.30829 &3.32101	&[3.30366, 3.32246]	&3	&4.46437 &4.48607	&[4.45867, 4.48806]	\\
6		&2	&3.28626 &3.29915	&[3.28162, 3.30067]	&3	&4.40272 &4.42552	&[4.39701, 4.42758]\\
8		&2	&3.24383 &3.26104	&[3.23921, 3.26284]	&3	&4.29259 &4.31753	&[4.2869, 4.31985]\\
10		&2	&3.20872	&3.22244	&[3.20409, 3.22404]	&3	&4.18401 &4.2091	&[4.1783, 4.21161]\\
12		&2	&3.16994 &3.18626	&[3.16529, 3.18807]	&3	&4.06657 &4.09375	&[4.06086, 4.09637]\\
14		&2	&3.12722 &3.14467	&[3.12253, 3.14666]	&3	&3.95351 &3.97621	&[3.94778, 3.97861]\\
16		&2	&3.08634 &3.10272	&[3.08169, 3.10464]	&3	&3.86402 &3.88597	&[3.85829, 3.88827]\\
18		&2	&3.05186	&3.06793	&[3.04713, 3.06991]	&3	&3.75608 &3.77801	&[3.75036, 3.78047]\\
20		&2	&3.01557 &3.03044	&[3.01088, 3.03233]	&3	&3.65138 &3.67244	&[3.64572, 3.67486]\\
\hline\hline
1		&4	&5.89744	&5.91312	& [5.89078, 5.91464]	&6	&7.91351	&7.93364	&[7.90538, 7.93527]\\
2		&4	&5.73736	&5.76003	&[5.73078, 5.76192]	&6 &7.55394	&7.58547	&[7.54595, 7.58773]\\
3		&4	&5.62688	&5.65097	&[5.62027, 5.65305]	&6	&7.28486	&7.32349	&[7.27684, 7.32611]\\
4		&4	&5.51763 &5.5468	&[5.51105, 5.54915]	&6	&7.01995 &7.06275	&[7.01198, 7.06577]\\
5		&4	&5.40154 &5.43295	&[5.39496, 5.43546]	&6	&6.7418 &6.78724	&[6.73383,6.79042]\\
6		&4	&5.27976 &5.30967	&[5.27316, 5.31227]	&6	&6.45197 &6.49774	&[6.44401, 6.50102]\\
8		&4	&5.06733 &5.10055	&[5.06078, 5.10335]	&6	&5.9401 &5.98546	&[5.93238, 5.989]\\
10		&4	&4.85039 &4.88637	&[4.84386, 4.88953]	&6	&5.46672 &5.50782	&[5.45916, 5.51116]\\
12		&4	&4.63227 &4.66583	&[4.62575, 4.66884]	&6	&5.08473 &5.11779	&[5.07729, 5.12079]\\
14		&4	&4.43104 &4.45997	&[4.42453, 4.46279]	&6	&4.76734 &4.79957	&[4.76006, 4.80269]\\
16		&4	&4.25079 &4.27955	&[4.24441, 4.28237]	&6	&4.39537 &4.42311	&[4.38864, 4.42584]\\
18		&4	&4.07804 &4.10338	&[4.07164, 4.10605]	&6	&4.18139 &4.20744	&[4.17473, 4.21004]\\
20		&4	&3.94562 &3.96789	&[3.93923, 3.97034]	&6	&4.02465	&4.04716	&[4.01803, 4.04968]\\
\hline\hline
1		&8	&9.60253	&9.62348	&[9.59318, 9.62507]	&10	&11.0436	&11.0661	&[11.0332, 11.0677]\\
2		&8	&8.97188	&9.01806	&[8.96279, 9.02078]	&10	&10.0822	&10.1411	&[10.0721, 10.1443]\\
3		&8	&8.48335	&8.53629	&[8.47425, 8.53952]	&10	&9.31393	&9.3793	&[9.30393, 9.38283]\\
4		&8	&8.00789 &8.06203	&[7.99887, 8.06551]	&10	&8.58082	&8.63832	&[8.57102, 8.64178]\\
5		&8	&7.5251 &7.57926	&[7.5161, 7.58278]	&10	&7.9058	&7.961	&[7.89611, 7.96454]\\
6		&8	&7.06562 &7.11754 &[7.05669, 7.12102]	&10	&7.33533	&7.38481	&[7.32577, 7.38835]\\
8		&8	&6.18418 &6.23051 &[6.17596, 6.23403]	&10	&6.20274	&6.24781	&[6.19445, 6.2513]\\
10		&8	&5.54885 &5.58774 &[5.54107, 5.59089]	&10	&5.54885	&5.58774	&[5.54107, 5.59089]\\
\hline
\end{tabular}
\caption{\textit{Off-peak volume constraints.} Numerical results for the off-peak swing option for various exercise rights and refraction periods. The simulations are based on the approximation $\widehat{\underline{Y}}^{l}_j$ to the Snell envelope via the stopping rule \eqref{eq:tau}.}
\label{table:prag_upper_vol}
\end{table}
\cbx{We run the above algorithm with $N_1=10000$, $N_2=300000$, $N_3=2000$, and $N_4=100$ sample paths. The numerical results for this off-peak
swing option are presented in Table  \ref{table:prag_upper_vol} for various choices of the number $L$ of exercise rights and the length $\delta$ of the refraction
period. Notice that the dual representations yet available in the literature do not cover the case of a nontrivial refraction period
($\delta \neq 1$) in combination with nontrivial volume constraints ($v\neq1$).
}
Due to the feature of allowing for exercising twice on weekends, the swing option prices are now higher than in the example \cbx{with unit volume constraint.
Moreover, additional rights can now become beneficial in situations in which they could not be exercised under the unit volume constraint
(e.g. the additional 8th right when the refraction period is $\delta=14$). }
\cbx{As for accuracy, }we \cbx{again }observe that the \cbx{relative }length of the $95\%$ confidence interval is less than $1\%$ \cbx{in all cases, which demonstrates
that the algorithm performs equally well in the presence of volume constraints. }


\begin{table}[htbp]
\begin{tabular}{l || l || l l l}
$\delta$	&$L$  &$Y^{up,L}_0$ 	&upper bound using \cite{Bender_vol} \\	
\hline \hline
 		 1   &1	&1.86485 (0.0019)		&1.8638 (0.0019)\\
     1   &2	&3.40832 (0.003)		&3.4078 (0.003)\\
     1   &3	&4.73509 (0.0037)		&4.7368 (0.0038)\\
     1   &4	&5.90956 (0.0043)		&5.9170 (0.0045)\\
     1   &5	&6.96665 (0.00047)		&6.98 (0.0052)\\
     1   &6	&7.92669 (0.005)		&7.9470 (0.0058)\\
     1   &7	&8.80743 (0.0055)		&8.8327 (0.0062)\\
     1   &8	&9.61643 (0.0058)		&9.6493 (0.0069)\\
     1   &9	&10.3642 (0.0061)		&10.4040 (0.0074)\\
     1   &10	&11.0553 (0.00064)	&11.1035 (0.0079)\\
\hline
\end{tabular}
\caption{\textit{Off-peak volume constraints.} A comparison between our upper bounds and the upper bounds obtained
via the algorithm from \cite{Bender_vol} for the case of unit refraction period. Standard deviations are displayed in parentheses.}
\label{table:bender_vol}
\end{table}

In the case of unit refraction period $\delta=1$, \cbx{upper price bounds for the off-peak swing option can also
be computed by the dual representation of \cite{Bender_vol} for the marginal price of a multiple exercise option. This approach
generalizes the ideas of \cite{MH2004}: An upper biased estimate for the marginal price of having an additional $l$th right is computed in terms
of one martingale and $(l-1)$ stopping times. By summing up these upper bounds for the marginal prices, one finally ends with
an upper biased estimate for the option price. This approach is based on the fact that, roughly speaking, under
the assumption of a trivial refraction period ($\delta=1$) optimal exercise times for the problem with $(l-1)$ rights
are also optimal for the problem with $l$ rights, if one adds
one additional exercise time in a clever way. This is clearly not possible in general in the presence of a nontrivial refraction period.
So it seems that this alternative approach cannot be easily generalized to include refraction periods.
}

Table \ref{table:bender_vol} compares the upper bounds obtained using our method and the method from \cite{Bender_vol}
for the unit refraction case $\delta=1$. We mention that in Table \ref{table:bender_vol}, the variance reduction method
from Remark \ref{rem:var_red} is \cbx{not applied for both algorithms. As both methods are run with the same number of
sample paths and the nested maximum in our method can be efficiently calculated by the recursion
formula in Proposition \ref{prop:recursive_max}, the computational effort is roughly the same for both algorithms. }
We observe that, as the number of exercise rights increases, our method of directly tackling the Snell envelope produces upper bounds that become lower than the
algorithm tackling the marginal values from \cite{Bender_vol}. Whereas the differences for $L=1,\ldots,4$ are numerically not
 significant yet, they however become noticeable starting from \cbx{$L=5$ }and \cbx{are striking }for e.g. $L=10$.
 We also note that the \cbx{larger }$L$, the better our method performs \cbx{concerning }the variance of the upper bounds. At large, we
 conclude that if one is \cbx{mainly }interested in the price (and not the marginal price) of the swing option, our \cbx{new }method
\cbx{performs better than the algorithm from }\cite{Bender_vol}. \cbx{Moreover, it is applicable to a larger class of problems.}

\medskip


\end{document}